%% file: Main_ArXiv.tex
\documentclass[aps,prl,reprint,superscriptaddress,twocolumn,notitlepage]{revtex4-2}

\usepackage{tikz}  
\usetikzlibrary{arrows,shapes,positioning,shadows,backgrounds,fit}

\usepackage{float}
\usepackage{relsize}
\usepackage{graphicx}
\usepackage{amsmath, bbm}
\usepackage{amssymb}
\usepackage{amsthm}
\usepackage{comment}
\usepackage{mathrsfs}
\usepackage{xcolor}
\usepackage{cancel}
\usepackage{bbold}
\usepackage{titlesec}
\usepackage{enumitem}  
\usepackage{stmaryrd}

\usepackage[normalem]{ulem}

\titlespacing{\section}{0ex}{2ex}{0.4ex}
\titleformat{name=\section}{\bf}{\thesection.}{.3em}{}

\usepackage{soul}
\usepackage{dsfont}
\usepackage{times}

\def\be{\begin{eqnarray}}
\def\ee{\end{eqnarray}}
\newcommand{\Tr}[1]{\mathrm{Tr}\left[#1\right]}

\newcommand{\ket}[1]{|{#1}\rangle}
\newcommand{\bra}[1]{\langle{#1}|}

\newcommand{\D}{\mathsf{D}}
\newcommand{\F}{\mathsf{F}}
\newcommand{\Werg}{W_{\mathrm{erg}}}

\newcommand{\Emin}{E_{\mathrm{min}}}

\newcommand{\est}{M,\Pi}                  
\newcommand{\rhoI}{\rho_{\est}}           
\newcommand{\UWI}{U_{W}^{(\est)}}         
\newcommand{\UWIdag}{U_{W}^{(\est)\dagger}}
\newcommand{\WI}{W_{\est}}                
\newcommand{\WItot}{W_{\est}^{\mathrm{tot}}} 
\newcommand{\etaI}{\eta_{\est}}           
\newcommand{\sigmaI}{\sigma_{\est}}       
\newcommand{\rrec}[1]{r^{(\est)}_{#1}}    
\newcommand{\Prec}[1]{P^{(\est)}_{#1}}

\theoremstyle{plain}

\newtheorem{lem}{Lemma}

\newtheorem{thm}{Theorem}

\definecolor{myblue}{rgb}{0.2,0.2,0.8}
\definecolor{myblack}{rgb}{0,0,0}
\definecolor{myurl}{rgb}{0.1,0.1,0.4}

\usepackage[colorlinks=true,citecolor=myblue,linkcolor=myblack,urlcolor=myurl]{hyperref}

\usepackage{bm}
\makeatletter
\newenvironment{supplementbib}[1]
  {\par\vskip1.0em\noindent
   \list{\@biblabel{\@arabic\c@enumiv}}%
        {\settowidth\labelwidth{\@biblabel{#1}}%
         \leftmargin\labelwidth\advance\leftmargin\labelsep
         \usecounter{enumiv}%
         \let\p@enumiv\@empty
         \renewcommand\theenumiv{\@arabic\c@enumiv}}%
   \footnotesize
   \sloppy\clubpenalty4000\@clubpenalty\clubpenalty\widowpenalty4000%
   \sfcode`\.\@m}
  {\def\@noitemerr{\@latex@warning{Empty `supplementbib' environment}}\endlist}
\makeatother

\begin{document}

\title{Quantum work extraction from partial information and with finite resources}

\author{Giacomo Guarnieri}

\affiliation{Department of Physics A. Volta, University of Pavia, Via Bassi 6, 27100, Pavia, Italy}
\affiliation{INFN Sezione di Pavia, Via Agostino Bassi 6, I-27100, Pavia, Italy}

\author{Diego Maragnano}

\affiliation{Department of Physics A. Volta, University of Pavia, Via Bassi 6, 27100, Pavia, Italy}

\author{Giulia Gamba}

\affiliation{Department of Physics A. Volta, University of Pavia, Via Bassi 6, 27100, Pavia, Italy}

\author{Lorenzo Zoppelletto}

\affiliation{Department of Physics A. Volta, University of Pavia, Via Bassi 6, 27100, Pavia, Italy}

\author{Marco Liscidini}

\affiliation{Department of Physics A. Volta, University of Pavia, Via Bassi 6, 27100, Pavia, Italy}

\begin{abstract}
Information can be converted into work, but in quantum mechanics information about a state is not freely available: it must be inferred statistically from measurements on a finite number of copies. We study work extraction in this finite-resource setting by introducing a partial-information and finite-resources (PIFR) quantum Maxwell’s demon. Given $N$ identical copies of a state with known Hamiltonian, the demon measures $M$ copies to estimate the state and the corresponding ergotropic unitary, which is then applied to the remaining $N-M$ copies. This protocol induces a trade-off between information acquisition, reconstruction accuracy, and thermodynamic yield, making the total extracted work normalized to the ideal ergotropic benchmark the relevant figure of merit. As our central result, we derive a universal closed-form trade-off bound that places this ergotropic efficiency between a Carnot-type ceiling $1-M/N$ and a floor controlled by a reconstruction precision rooted in finite-sample quantum estimation theory; optimizing the copy allocation yields $M^{\ast}\propto N^{2/3}$ and an $N^{-1/3}$ approach to the ideal limit, set by a conservative, worst-case reconstruction precision. By considering standard quantum state tomography, we numerically verify the presence of an optimal resource distribution, which also depends on the purity of the state under consideration. Our results identify finite-copy work extraction as a genuinely task-dependent inference problem, in which estimation strategies should be judged by thermodynamic performance rather than reconstruction fidelity alone.
\end{abstract}

\maketitle

\textit{Introduction. --}
Since its original inception, Maxwell’s demon has provided the paradigmatic setting in which information and thermodynamics confront each other: a hypothetical agent that acquires microscopic information about a system can in principle exploit it to extract work, seemingly violating the Second Law of Thermodynamics~\cite{maxwell2001theory,maruyama2009colloquium}. This apparent paradox was resolved by the works of Szilard~\cite{szilard1929entropieverminderung,szilard1964decrease}, Landauer~\cite{landauer1961irreversibility} and Bennett~\cite{bennett1973logical,bennett1982thermodynamics}, who revealed that the extracted energy depends on the information collected by the demon and that erasing this information carries an irreversible energetic cost~\cite{delrio2011thermodynamic,faist2015minimal} that balances or outweighs the extractable work~\cite{parrondo2015thermodynamics,sagawa2008second,toyabe2010experimental,koski2014experimental,berut2012experimental}.

Unlike in the classical regime in which Maxwell's demon was originally proposed, quantum mechanics dictates that acquiring information about a system is itself not free~\cite{goold2016role,vinjanampathy2016quantum,horodecki2013fundamental,brandao2015second}. Complete knowledge of a quantum state can only be obtained statistically from measurements performed on many identically prepared copies of the system. Moreover, quantum measurements generally modify the state itself. After a projective measurement, for instance, a copy is no longer left in the original state, but collapses onto the eigenstate associated with the observed outcome. The copies used to acquire information are therefore effectively consumed by the measurement process and can no longer be used for work extraction.

As a consequence, when only a finite number $N$ of copies is available, some of them must necessarily be sacrificed to infer the information required to determine the work-extraction protocol, while only the remaining copies can actually be exploited to extract work. This raises a fundamentally new thermodynamic question that is absent in the classical formulation of Maxwell's demon, where only the storage and manipulation of information carry a cost: what is the trade-off between acquiring information and exploiting it for work extraction? Beyond its fundamental relevance, this question is also important from a practical perspective. In realistic quantum technologies resources are often scarce, and one must determine how they should be optimally divided between information acquisition and work extraction.

\begin{figure}[t!]
    \centering
    \includegraphics[width=\linewidth]{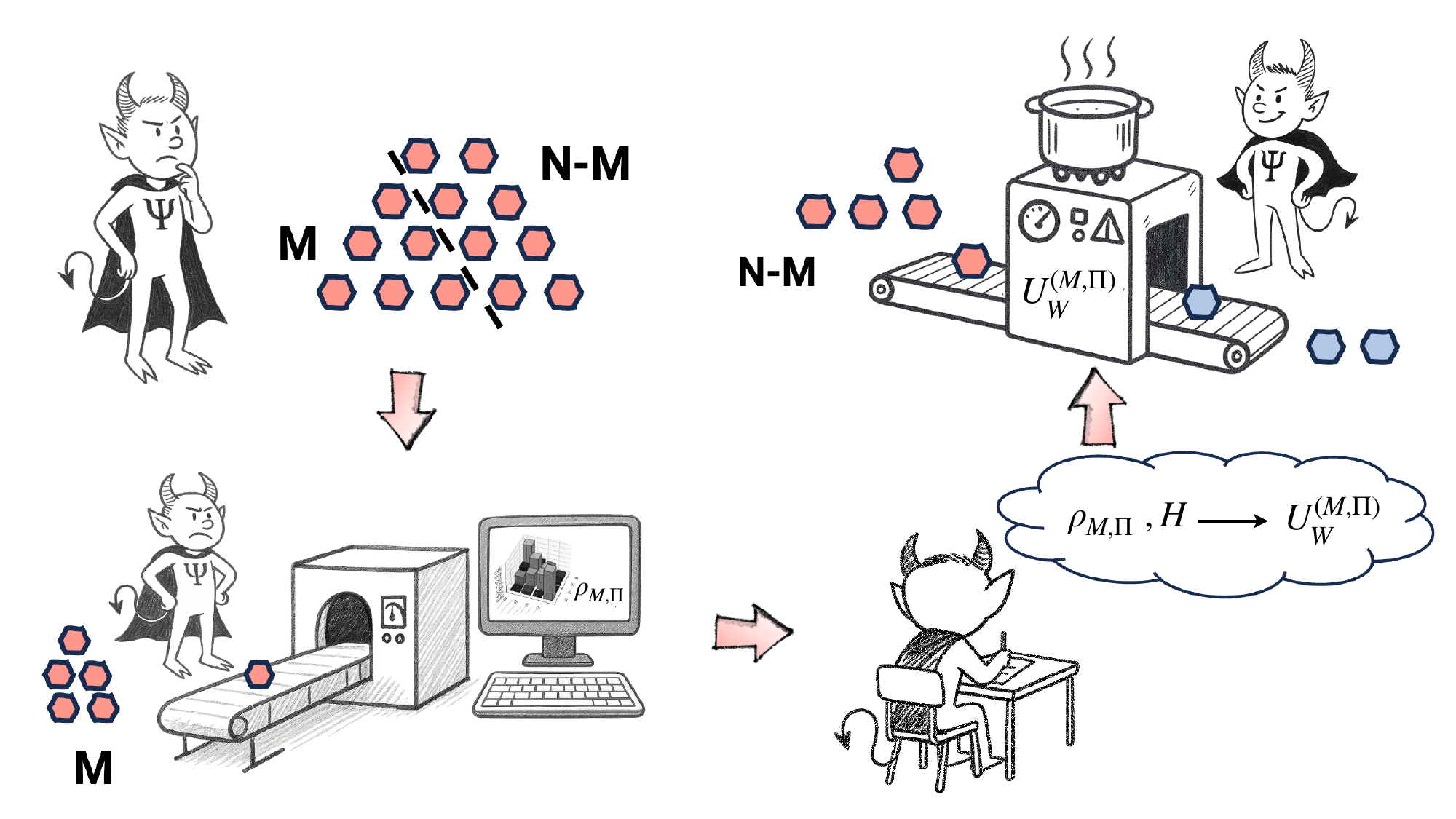}
    \caption{Schematic representation of a Maxwell's demon that is constrained to work with only a finite number $N$ of copies of a quantum system. The demon will have to perform some measurements on $M<N$ in order to reconstruct an estimate of the state and then use this acquired \textit{partial} information in order to unitarily extract the maximum work possible (ergotropy) from the remaining $N-M$ copies.}
    \label{fig:figure1}
\end{figure}

To address this question, we consider the following thermodynamic task. An agent is given $N$ identical copies of an unknown quantum state $\rho$, governed by a known Hamiltonian $H$ (see Fig.~\ref{fig:figure1}). The goal is to extract as much work as possible through unitary operations. In the ideal case in which $\rho$ is perfectly known, the maximum extractable work is given by the ergotropy of the state~\cite{pusz1978passive,lenard1978thermodynamical,skrzypczyk2014work}, and it can be achieved through the corresponding optimal unitary transformation. In practice, however, determining this optimal operation requires information about the state itself ~\cite{allahverdyan2004maximal}.

We therefore introduce a partial-information and finite-resources (PIFR) quantum Maxwell's demon. Unlike an idealized omniscient demon, the PIFR demon can only access partial information about $\rho$, obtained from measurements performed on a subset $M$ of the available copies. From this incomplete information, it reconstructs an estimate of the state, infers the corresponding work-extraction unitary, and applies it to the remaining $N-M$ untouched copies.

The performance of the PIFR demon is governed by a fundamental finite-resource trade-off. Measuring more copies generally improves the inferred control operation, but leaves fewer copies available for work extraction. Measuring fewer copies has the opposite effect: more copies remain available for work extraction, but the inferred operation becomes less accurate. The central question is therefore how finite resources should be optimally divided between learning and thermodynamic exploitation.

This finite-resource regime also changes the meaning of optimal state reconstruction itself. In standard quantum state tomography~\cite{james2001qst,gross2010quantum}, the quality of a protocol is naturally quantified by the fidelity between the reconstructed and true states. Here, however, information acquisition is not an end in itself, but a resource for a subsequent thermodynamic task. When only partial information can be acquired, reconstruction fidelity alone is therefore no longer the most relevant figure of merit. What ultimately matters is how effectively the acquired information can be converted into extractable work~\cite{huang2020predicting,aaronson2020shadow,elben2023randomized}.

Conceptually related investigations have recently appeared: finite-sample bounds for quantum metrology were established in Ref.~\cite{meyer2025quantum}; the certification of ergotropy and of many-body properties under informationally incomplete or imperfect measurements was developed in Refs.~\cite{pagliaro2026certifying,zambrano2026certification,mortimer2026bounding}; and the sample complexity of black-box work extraction was analyzed in Ref.~\cite{chakraborty2025sample}. Our work complements these results: rather than certifying a lower bound on ergotropy from fixed data, we quantify the unavoidable inefficiency of the PIFR protocol as an information-thermodynamic tradeoff. We first derive a universal closed-form bound (Theorem~\ref{thm:main} below) relating the efficiency of the protocol to $N$, $M$, and a reconstruction precision $\varepsilon_M$ certified by the chosen estimation strategy. We then make the task dependence concrete by considering a demon reconstructing the state with standard quantum state tomography (QST). 

\textit{PIFR Quantum Maxwell's demon and ergotropy.}---Consider a quantum
system in a $D$-dimensional Hilbert space and with Hamiltonian
\begin{equation}
H=\sum_{k=1}^{D}\epsilon_k\,|\epsilon_k\rangle\!\langle\epsilon_k|,
\qquad
\epsilon_1\leq \epsilon_2\leq \cdots \leq \epsilon_D,
\label{eq:H}
\end{equation}
which we assume to be known. The state, instead, is unknown
and denoted by $\rho$. The demon has access to $N$ i.i.d.\ non-interacting copies
of the system, $\rho^{\otimes N}$. A measurement performed on one
copy is assumed to be destructive, in the sense that the measured
copy cannot subsequently be used for work extraction. We further assume that work extraction is performed locally on the remaining copies, by applying the same single-copy unitary to each unmeasured copy.

For a given state $\rho$ and Hamiltonian $H$, the natural ideal
benchmark is the ergotropy, i.e.\ the maximum work extractable from
$\rho$ by unitary operations~\cite{allahverdyan2004maximal},
\begin{equation}
W_{\rm erg}(\rho,H)
=
\max_{U}\,
\Tr{(H-U^{\dagger}HU)\rho}
=
\Tr{\rho H}-E_{\min}(\rho),
\label{eq:ergotropy}
\end{equation}
with
\begin{equation}
E_{\min}(\rho):=\min_{U}\Tr{U\rho U^{\dagger}H}.
\label{eq:Emin}
\end{equation}
If $\rho=\sum_{k}r_k|r_k\rangle\!\langle r_k|$, with
$r_1\ge r_2\ge\cdots\ge r_D$, the corresponding ergotropic unitary
can be written as
\begin{equation}
U_W(\rho)=\sum_{k=1}^{D} |\epsilon_k\rangle\!\langle r_k|,
\label{eq:UW}
\end{equation}
which rearranges the populations of $\rho$ into the passive state
associated with $H$. 

An oracle-demon with exact prior knowledge of $\rho$ would not need to
consume any copies in order to infer the work-extraction protocol. It could
instead compute the ergotropic unitary $U_W(\rho)$ directly and apply it to
all $N$ available copies. This gives the ideal upper bound on the total
extractable work,
\begin{equation}
W_{\rm bound}(N)=N\,W_{\rm erg}(\rho,H).
\label{eq:Wideal}
\end{equation}

The PIFR demon, however, operates under stricter constraints. Out of the
$N$ available copies, it uses $M<N$ of them to acquire partial information
about the state through a measurement scheme $\Pi$. The resulting
reconstruction is denoted by $\rhoI $, emphasizing its dependence on
both the number of measured copies and the chosen measurement scheme. The
demon then computes the ergotropic unitary associated with this reconstructed
state,
\begin{equation}
\UWI :=U_W(\rhoI ) .
\label{eq:UWMPi}
\end{equation}

This unitary is then applied to the remaining $N-M$ untouched copies
of the true state $\rho$. The extracted work per copy is thus
\begin{equation}
\WI 
=
\Tr{\rho H}
-
\Tr{\left(
\UWI 
\rho
\UWIdag 
\right)H},
\label{eq:WMPi}
\end{equation}
and, within the present local-extraction protocol, the total extracted
work is
\begin{equation}
\WItot (N)=(N-M)\,\WI .
\label{eq:WMPitot}
\end{equation}

Since $\UWI $ is optimal for $\rhoI $ rather than for the true
state $\rho$, one always has
\begin{equation}
\WI \leq W_{\rm erg}(\rho,H),
\label{eq:upperWMPi}
\end{equation}
with equality in the ideal case $\rhoI =\rho$.

The central figure of merit of this work is the \emph{ergotropic
efficiency}, defined as the extracted work normalized to the ideal upper
bound,
\begin{equation}
\etaI :=
\frac{\WItot (N)}{W_{\rm bound}(N)}
=
\left(1-\frac{M}{N}\right)
\frac{\WI }{W_{\rm erg}(\rho,H)}.
\label{eq:efficiency}
\end{equation}

This quantity directly captures the competition between learning and
exploitation. If $M=N$, all copies are consumed by the reconstruction
stage and no copy is left for work extraction, so that
$\etaI =0$. At the opposite extreme, unit efficiency would require
perfect state knowledge without consuming any copies, corresponding to
the oracle limit $M=0$ and $\WI =W_{\rm erg}(\rho,H)$. In any
realistic finite-copy setting, however, increasing $M$ improves the
reconstruction, and hence the inferred unitary, at the price of leaving
fewer copies from which to extract work; conversely, small $M$ leaves
more copies for extraction but typically yields a less accurate
work-extraction unitary.

After consuming $M$ copies, even if the reconstruction allows the demon to determine and implement a unitary that coincides with the true ergotropic unitary of $\rho$\footnote{This may occur in exceptional cases even when $\rhoI \neq\rho$, e.g.\ when the two states share their eigenbasis and eigenvalue ordering.}, the efficiency is still bounded
by
\begin{equation}
\etaI \leq 1-\frac{M}{N}.
\label{eq:copy_bound}
\end{equation}
One of the main results of this Letter is to quantify \emph{how much
lower} than this copy-counting bound the efficiency is driven by the
finite accuracy of the reconstruction.

\textit{Finite-sample tradeoff bound. --}
We now derive our central theoretical result, which answers this question in a closed form. The key step is to bound the \emph{ergotropic deficit} $\Delta W:=\Werg(\rho,H)-\WI \ge 0$ in terms of a distance between the reconstructed and true states.

\begin{lem}[Ergotropic-deficit bound]\label{lem:deficit}
For any reconstructed state $\rhoI $,
\begin{equation}
\Delta W\;\le\; 2\,\Delta_{H}\,\D(\rho,\rhoI ),
\label{eq:deficit-bound}
\end{equation}
where $\Delta_{H}:=\epsilon_{D}-\epsilon_{1}$ is the spectral spread of $H$ and $\D(\rho,\sigma):=\tfrac12\|\rho-\sigma\|_{1}$ is the trace distance.
\end{lem}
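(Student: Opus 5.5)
Write $\sigma:=\rhoI$ and $U:=\UWI=U_W(\sigma)$. The plan is to compare $\rho$ and $\sigma$ under the same unitary and use that $U$ is optimal for $\sigma$. Add and subtract the $\sigma$-quantities to split the deficit into three pieces:
\begin{align}
\Delta W &= \Werg(\rho,H)-\WI \nonumber\\
&= \Tr{U\rho U^{\dagger}H}-\Emin(\rho) \nonumber\\
&= \bigl(\Tr{U\rho U^{\dagger}H}-\Tr{U\sigma U^{\dagger}H}\bigr) \nonumber\\
&\quad+\bigl(\Emin(\sigma)-\Emin(\rho)\bigr),
\end{align}
where I used $\Tr{U\sigma U^{\dagger}H}=\Emin(\sigma)$, which holds because $U=U_W(\sigma)$ is the ergotropic unitary of $\sigma$ and therefore attains the minimum in Eq.~\eqref{eq:Emin}.

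Each piece will then be bounded by $\Delta_H\,\D(\rho,\sigma)$. The basic tool is an elementary inequality: for any traceless Hermitian $X$ and any Hermitian $A$,
\begin{equation}
|\Tr{XA}|\le \tfrac12\|X\|_1\,(\lambda_{\max}(A)-\lambda_{\min}(A)).
\end{equation}
To prove it, shift $A\to A-c\mathbb{1}$ with $c=(\lambda_{\max}+\lambda_{\min})/2$; this leaves $\Tr{XA}$ unchanged because $X$ is traceless, and makes $\|A\|_\infty$ equal to half the spread. Hölder's inequality then gives the claim.

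For the first piece, take $X=U(\rho-\sigma)U^{\dagger}$ and $A=H$. Unitary invariance of the trace norm gives $\|X\|_1=\|\rho-\sigma\|_1$, so the piece is at most $\tfrac12\|\rho-\sigma\|_1\Delta_H=\Delta_H\D(\rho,\sigma)$.

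For the second piece, let $V$ be the minimizer for $\rho$, i.e.\ $\Emin(\rho)=\Tr{V\rho V^{\dagger}H}$. Since $\Emin(\sigma)\le\Tr{V\sigma V^{\dagger}H}$, we get
\begin{equation}
\Emin(\sigma)-\Emin(\rho)\le\Tr{V(\sigma-\rho)V^{\dagger}H},
\end{equation}
and the same inequality bounds this by $\Delta_H\D(\rho,\sigma)$. Adding the two pieces yields $\Delta W\le 2\Delta_H\D(\rho,\sigma)$, as claimed.

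The only step needing care is the passive-state optimality fact $\Tr{U_W(\sigma)\sigma U_W(\sigma)^{\dagger}H}=\Emin(\sigma)$. This is the standard rearrangement (von Neumann trace) inequality underlying Eq.~\eqref{eq:UW}, which the paper takes as given. One must also check it holds when $\sigma$ has degenerate eigenvalues, where $U_W(\sigma)$ is not unique: every ordering-compatible choice of eigenbasis yields the same minimal energy, so the identity holds for whichever choice the demon makes. Beyond this, one should note explicitly that the shift trick requires $\Tr{X}=0$, which holds because both states have unit trace. This is what produces the spread $\Delta_H$ rather than $\|H\|_\infty$.
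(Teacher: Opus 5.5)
Your proposal is correct and follows the paper's proof essentially step for step. The paper uses the same split into $\mathrm{Tr}[U^{\dagger}HU(\rho-\sigma)]$ plus $E_{\min}(\sigma)-E_{\min}(\rho)$, the same centring-plus-H\"older bound on each term, and the same trial-unitary argument for the minimal-energy term; the paper states that last step as two-sided Lipschitz continuity of $E_{\min}$, of which you use only the direction actually needed.
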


\noindent\emph{Proof sketch.} Writing $\Delta W=\{\Tr{H\sigmaI }-\Emin(\rhoI )\}+\{\Emin(\rhoI )-\Emin(\rho)\}$ with $\sigmaI :=\UWI \rho\, \UWIdag $, each term is bounded by $\Delta_{H}\D(\rho,\rhoI )$ via H\"older's inequality applied to the centred Hamiltonian $H-\tfrac12(\epsilon_{1}+\epsilon_{D})\mathbb{1}$, of spectral radius $\Delta_{H}/2$, together with the Lipschitz continuity of $\Emin$; the full proof is given in the Supplemental Material (SM)~\cite{SM}. $\hfill\square$

The strength of Lemma~\ref{lem:deficit}, which can be regarded as the ergotropic counterpart of the Fannes--Audenaert inequality for entropies, lies in its universality: it applies to any reconstruction strategy $\mathcal{R}$ that achieves trace distance $\varepsilon_M$ with probability $\ge 1-p_{\mathrm{fail}}$ on $M$ copies. Combining it with Eqs.~\eqref{eq:efficiency} and \eqref{eq:upperWMPi}, we obtain our main result.

\begin{thm}[Finite-sample tradeoff bound]\label{thm:main}
Define the dimensionless \emph{ergotropic susceptibility}
\begin{equation}
\Lambda(\rho,H):=\frac{2\Delta_{H}}{\Werg(\rho,H)}.
\label{eq:Lambda}
\end{equation}
Under the PIFR protocol, with probability at least $1-p_{\mathrm{fail}}$ over the randomness of the measurement outcomes,
\begin{equation}
\left(1-\frac{M}{N}\right)\!\bigl[1-\Lambda(\rho,H)\,\varepsilon_{M}\bigr]_{+}
\!\le\,\etaI \,\le\,1-\frac{M}{N},
\label{eq:main-bound}
\end{equation}
where $[x]_{+}:=\max(x,0)$, and $\varepsilon_{M}\equiv\varepsilon_{M}(\mathcal{R},p_{\mathrm{fail}})$ is a high-probability upper bound on $\D(\rho,\rhoI )$ certified by the reconstruction strategy $\mathcal{R}$ with $M$ copies.
\end{thm}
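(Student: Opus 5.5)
The plan is to combine the factorized form of the efficiency, Eq.~\eqref{eq:efficiency}, with the deficit bound of Lemma~\ref{lem:deficit}, conditioning on the high-probability event certified by the reconstruction strategy. Since $\etaI=(1-M/N)\,\WI/\Werg(\rho,H)$, both inequalities in Eq.~\eqref{eq:main-bound} reduce to bounds on the single-copy ratio $\WI/\Werg(\rho,H)$. Throughout I will assume $\Werg(\rho,H)>0$; this is needed for $\Lambda$ and $\etaI$ to be defined. The passive case is degenerate and will be excluded explicitly.

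\emph{Upper bound.} This half is deterministic and holds for every measurement record. Eq.~\eqref{eq:upperWMPi} gives $\WI\le\Werg(\rho,H)$, because $\UWI$ is one admissible unitary in the maximization of Eq.~\eqref{eq:ergotropy}. Dividing by $\Werg>0$ and multiplying by $1-M/N\ge 0$ yields $\etaI\le 1-M/N$, which is Eq.~\eqref{eq:copy_bound}.

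\emph{Lower bound.} Let $\mathcal{E}$ be the event $\D(\rho,\rhoI)\le\varepsilon_M$. By the definition of $\varepsilon_M(\mathcal{R},p_{\mathrm{fail}})$, this event has probability at least $1-p_{\mathrm{fail}}$ over the outcomes of the $M$ measured copies. On $\mathcal{E}$, Lemma~\ref{lem:deficit} applies to the particular realization $\rhoI$ and gives
\[
\WI=\Werg-\Delta W\ge \Werg-2\Delta_H\varepsilon_M .
\]
Dividing by $\Werg$ gives $\WI/\Werg\ge 1-\Lambda(\rho,H)\,\varepsilon_M$.

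To upgrade this to the positive part $[\cdot]_+$, I also need $\WI\ge 0$. This holds because $\Tr{\rho H}$ is the energy of $\rho$, and the term subtracted in Eq.~\eqref{eq:WMPi}, $\Tr{\UWI\rho\,\UWIdag H}$, is as large as possible only when the ordering is adversarial. So this step is not automatic.

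The main obstacle is exactly this nonnegativity. An ergotropic unitary built from a wrong eigenbasis could in principle raise the energy of $\rho$, making $\WI<0$. If so, the bound $\etaI\ge(1-M/N)[1-\Lambda\varepsilon_M]_+$ would fail whenever $\Lambda\varepsilon_M>1$. I would first try to show that on $\mathcal{E}$ the only regime where $[\cdot]_+$ matters, $\Lambda\varepsilon_M\ge 1$, forces nothing stronger than $\etaI\ge 0$.

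If that fails, I would read the statement as applying to a demon that may decline to act, i.e.\ that applies $\mathbb{1}$ whenever its own estimate predicts no gain, which gives $\WI\ge0$ trivially. I would flag this as an interpretive choice. Put differently, the clipped bound is established via the unclipped inequality $1-\Lambda\varepsilon_M$, which is the substantive content. The positive part then follows once $\WI\ge 0$ is secured, either by this protocol convention or by the remark that the bound is only informative when $\Lambda\varepsilon_M<1$.

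Finally, multiplying by $1-M/N$ gives the lower bound on $\mathcal{E}$. Since the upper bound holds everywhere, both hold simultaneously with probability at least $1-p_{\mathrm{fail}}$.
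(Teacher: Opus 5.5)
Your upper bound and your unclipped lower bound coincide with the paper's proof. The paper obtains Theorem~\ref{thm:main} by applying Lemma~\ref{lem:deficit} on the event $\D(\rho,\rhoI)\le\varepsilon_M$ and combining the result with Eqs.~\eqref{eq:efficiency} and \eqref{eq:upperWMPi}. You are also right that the positive part $[\cdot]_+$ does not follow; the paper simply writes it down without justification. The obstruction, however, lies in the statement, not in your argument. On $\mathcal{E}$ one can have $\WI<0$, so your first route (showing $\etaI\ge 0$ on $\mathcal{E}$ whenever $\Lambda\varepsilon_M\ge1$) cannot succeed.

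\emph{Counterexample.} Take $D=2$ and $\rho=|+\rangle\langle+|$ with $|\pm\rangle=(|\epsilon_1\rangle\pm|\epsilon_2\rangle)/\sqrt2$. Then $\Werg(\rho,H)=\Delta_H/2$ and $\Lambda=4$. Let the estimate be $\rhoI=\tfrac12(\mathbb{1}-\delta\,\sigma_x)$, with $\sigma_x=|+\rangle\langle+|-|-\rangle\langle-|$ and $0<\delta\le1$. This is a nearly maximally mixed estimate tilted the wrong way, which is typical at very small $M$. Its dominant eigenvector is $|-\rangle$, so $\UWI$ sends $|+\rangle$ to $|\epsilon_2\rangle$. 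Hence $\WI=-\Delta_H/2=-\Werg$, while $\D(\rho,\rhoI)=(1+\delta)/2$. For any certificate $\varepsilon_M\ge(1+\delta)/2$ this outcome lies in $\mathcal{E}$; the main-text benchmark exceeds $1/2$ for a qubit at $p_{\mathrm{fail}}=0.05$ once $M\le 8$. The claimed floor is then $(1-M/N)[1-4\varepsilon_M]_+=0$, but $\etaI=-(1-M/N)<0$. A strategy that outputs this estimate with probability above $p_{\mathrm{fail}}$ therefore violates the clipped statement.

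Your fallback does not repair this. The demon's predicted gain is $\Werg(\rhoI,H)\ge0$ for every estimate, because the identity is admissible in Eq.~\eqref{eq:ergotropy}. So the rule \emph{apply $\mathbb{1}$ when the estimate predicts no gain} essentially never fires. In the example the predicted gain is $\delta\Delta_H/2>0$, while the realized yield is $-\Delta_H/2$.

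Guaranteeing $\WI\ge0$ on $\mathcal{E}$ would need a threshold rule. For instance, act only if $\Werg(\rhoI,H)>2\Delta_H\varepsilon_M$, which works because $|\WI-\Werg(\rhoI,H)|\le 2\Delta_H\,\D(\rho,\rhoI)$. That is a different protocol, and its floor would have to be re-derived.

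Also drop the sentence asserting that $\WI\ge0$ \emph{holds because} of the ordering. It is false, and it contradicts your next line.

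What your argument, like the paper's, actually proves is the unclipped floor $\etaI\ge(1-M/N)\bigl(1-\Lambda(\rho,H)\,\varepsilon_M\bigr)$ on $\mathcal{E}$. Since $\D\le1$, you may replace $\varepsilon_M$ by $\min\{\varepsilon_M,1\}$. Equivalently, the stated bound holds only in the regime $\Lambda\varepsilon_M\le1$, where $[\cdot]_+$ is inactive. This is consistent with point (iv) of Sec.~E of the Supplemental Material, which concedes that $\WI$ can be negative and uses the unclipped form. State that as your conclusion instead of claiming that the clipped bound follows.
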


\noindent The upper bound in Eq.~\eqref{eq:main-bound} is the copy-counting ceiling of Eq.~\eqref{eq:copy_bound}, saturated in the oracle limit $\varepsilon_M\to 0$; the lower envelope quantifies the \emph{unavoidable inefficiency} imposed by the finiteness of $M$. The name given to $\Lambda$ follows linear-response theory: by Lemma~\ref{lem:deficit}, $\Lambda$ is the coefficient that converts a small reconstruction error into the leading-order relative work loss, exactly as a magnetic susceptibility converts a weak applied field into a magnetization change. Nearly passive states ($\Werg\to0$ at fixed $\Delta_H$) are maximally susceptible; equivalently, $1/\Lambda$ is the largest reconstruction error compatible with a nontrivial guarantee.
\begin{figure*}[!ht]
    \centering
    \includegraphics[width=0.99\linewidth]{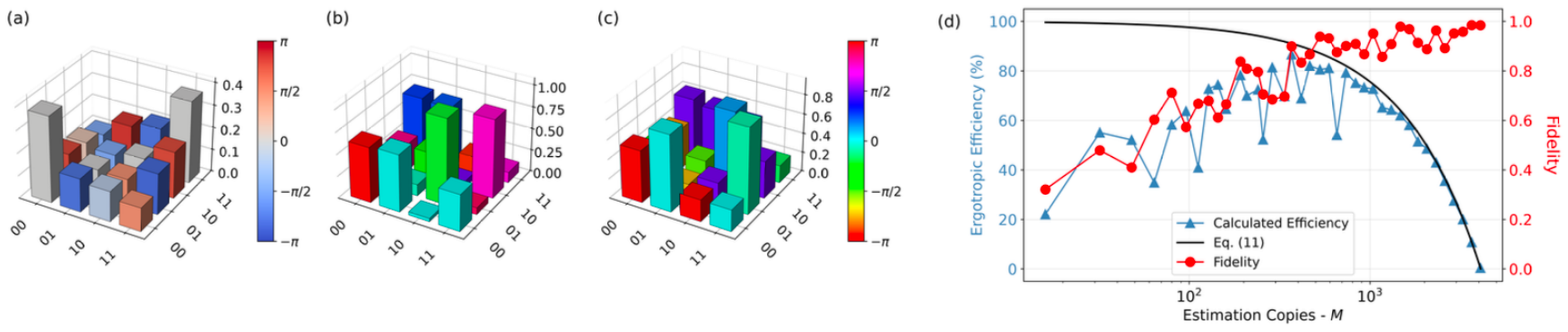}
    \caption{Steps of the PIFR protocol with QST for a single two-qubit density matrix. (a) Density matrix reconstructed with QST using 5 copies per projective measurement (i.e.\ $M=80$). (b) Ideal ergotropic unitary matrix. (c) Ergotropic unitary matrix computed from the reconstructed state of panel (a). (d) Efficiency (left vertical axis) and fidelity between the reconstructed and the true state (right vertical axis) as a function of the number $M$ of copies used to reconstruct the state, for $N=4096$. The black solid line is the copy-counting bound $1-M/N$ of Eq.~\eqref{eq:copy_bound}.}
    \label{fig:figure2}
\end{figure*}

Theorem~\ref{thm:main} becomes fully explicit once $\varepsilon_M$ is computed. The fundamental, strategy-independent benchmark follows from the finite-sample estimation bounds of Ref.~\cite{meyer2025quantum}: treating state reconstruction as the simultaneous estimation of the $D^2-1$ generalized Bloch parameters~\cite{paris2009quantum,giovannetti2011advances}, converting fidelity into trace distance through the Fuchs--van de Graaf inequality~\cite{fuchs1999cryptographic}, and allocating the confidence budget optimally across parameters, one obtains, to leading order in $1/M$, $\varepsilon_{M}=\tfrac12\left[(D^{2}-1)\log[(D^{2}-1)/(4p_{\mathrm{fail}})]/M\right]^{1/2}$. Remarkably, the eigenvalues of the quantum Fisher information matrix cancel exactly in this expression; moreover, its $\left[(D^{2}-1)/M\right]^{1/2}$ scaling is robust against the measurement-incompatibility subtleties of multi-parameter quantum estimation~\cite{szczykulska2016multi}, matching the asymptotic Holevo Cram\'er--Rao rate up to a logarithmic factor~\cite{SM}. Protocol-specific certificates of the same $M^{-1/2}$ form, with larger dimensional prefactors, follow from Hoeffding-type concentration for the tomographic schemes used below~\cite{pagliaro2026certifying,zambrano2026certification,mortimer2026bounding,SM}.

Two structural consequences follow. \emph{(i) Optimal allocation.} Substituting $\varepsilon_{M}\simeq A/\sqrt{M}$---with $A=\tfrac{\Lambda}{2}\left[(D^{2}-1)\log[(D^{2}-1)/(4p_{\mathrm{fail}})]\right]^{1/2}$ for the benchmark above---and maximizing the lower envelope over $M$ yields
\begin{equation}
M^{\ast}\simeq \left(\frac{AN}{2}\right)^{2/3},\qquad \etaI^{\max}\simeq 1-\tfrac{3}{2^{2/3}}A^{2/3}N^{-1/3}.
\label{eq:Mstar}
\end{equation}
The approach to the oracle limit is therefore $1-\mathcal{O}(N^{-1/3})$: slower than both the standard-quantum-limit learning rate $M^{-1/2}$ and the linear exploitation cost $1-M/N$, because the optimum must balance the two. This is the thermodynamic price of inference. \emph{(ii) Regime of validity.} The lower envelope is non-trivial only when $\Lambda(\rho,H)\varepsilon_{M}<1$, i.e.\ when $M>(\Delta_{H}/\Werg)^{2}(D^{2}-1)\log[(D^{2}-1)/(4p_{\mathrm{fail}})]$. For near-passive states $(\Werg\to 0)$ this threshold diverges, consistent with the black-box work-extraction impossibility of Ref.~\cite{chakraborty2025sample}.

Theorem~\ref{thm:main} shifts the operational meaning of ``good reconstruction'' in the PIFR context: minimizing $\varepsilon_{M}$ in \emph{trace distance} (or equivalently maximizing fidelity) is, to leading order, the correct objective for thermodynamic task performance---but only when weighted by the state-dependent factor $\Lambda(\rho,H)$. A reconstruction strategy that is globally fidelity-optimal can be thermodynamically suboptimal if it wastes measurement effort on features of $\rho$ that carry small weight in $\Lambda$; conversely, a state-dependent, task-adapted strategy could close most of the gap to the copy-counting bound even with very few copies \cite{binosi2024tailor,caruccio2025experimental}
\\

We now illustrate a practical implementation of the PIFR demon on a system of $n=2$ qubits, with Hilbert-space dimension $D=2^{n}=4$. To implement the protocol, the demon reconstructs the density matrix using conventional QST and determines the corresponding ergotropic unitary from Eq.~\eqref{eq:UWMPi}.

\textit{Implementation of the PIFR demon. --} We consider a Hamiltonian diagonal in the computational basis,
\begin{align}
    H = \sum_{k=0}^{D-1} \epsilon_k \, |k\rangle\langle k| ,
    \label{eq:RandomH}
\end{align}
with energies $\{\epsilon_k\}_{k=0}^{D-1}$ drawn independently from a standard normal distribution, $\epsilon_k \sim \mathcal{N}(0,1)$, and sorted in ascending order. These values were chosen arbitrarily, subject only to the requirement that the spectrum be non-degenerate. In our calculations, the energies were fixed once for all numerical experiments to be $\{\epsilon_k\} = \{-0.1321,\ 0.1049,\ 0.1257,\ 0.6404\}$,
which fully specifies $H$ for reproducibility purposes. 

For each input state $\rho$, the demon is provided with $N$ i.i.d.\ copies of the system. A subset of $M<N$ copies is consumed to reconstruct $\rho$, whereas the remaining $N-M$ copies are retained for work extraction. In the single-state illustration of Fig.~\ref{fig:figure2} and in the QST benchmark of Fig.~\ref{fig:figure3}(a,b), we fix $N=4096$. We vary $M$ to explicitly illustrate the trade-off between reconstruction accuracy and the number of copies available for work extraction.

Because the performance of the PIFR demon is state dependent, we assess its typical behavior by averaging over ensembles of $2{,}000$ density matrices, which we verified to be statistically representative. We consider two ensembles: (i) pure states and (ii) mixed states whose purities, $\Tr{\rho^2}$, are uniformly distributed over the allowed interval $[1/D,1]$. 

Since the probe states $\rho$ are themselves Haar-random, the relative orientation between the eigenbases of $\rho$ and $H$ is randomized by the state ensemble, and once the ergotropic efficiency is averaged over a sufficiently large number of such states the resulting curves become essentially indistinguishable from those obtained with any other choice of non-degenerate spectrum for $H$. 
Degeneracies in the spectrum of $H$, by contrast, should be treated with care, since they leave the passive-state ordering non-unique within the degenerate subspace, allowing reconstruction noise to bias the estimated efficiency in a way that does not necessarily average away.

For two qubits, conventional QST involves $16$ independent projective measurements. The $M$ copies used for state reconstruction are distributed uniformly among these measurements, and Poissonian statistics are assumed for the corresponding measurement counts. Consequently, throughout our calculations, $M$ is restricted to integer multiples of $16$, with the smallest value, $M=16$, corresponding to one copy per projective measurement.

Figure~\ref{fig:figure2}(a) shows the density matrix of a randomly selected state reconstructed using $M=80$ copies, corresponding to five copies per projective measurement. The ergotropic unitary associated with the exact state is shown in panel (b), whereas the unitary inferred from the reconstructed state is shown in panel (c). In Fig.~\ref{fig:figure2}(d), we report the Uhlmann fidelity between the true and reconstructed states, together with the corresponding ergotropic efficiency, as functions of $M$. As expected, the fidelity improves overall as more copies are used for the reconstruction. The point-to-point fluctuations observed in both quantities arise from the Poissonian noise affecting the simulated measurement counts.
\begin{figure*}
    \centering
    \includegraphics[width=0.99\linewidth]{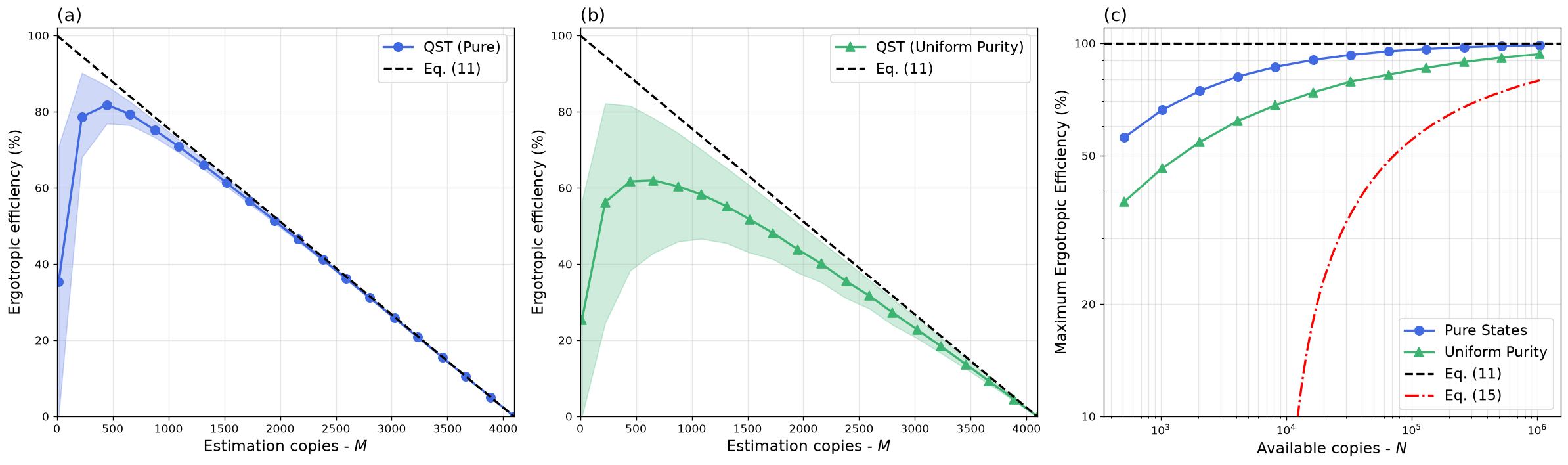}
    \caption{QST benchmarking of the PIFR demon against the theoretical bounds. (a) Average ergotropic efficiency for two-qubit pure states vs. the number $M$ of copies used for the reconstruction, with $N=4096$ available copies, averaged over $2{,}000$ random pure states; the shaded band indicates one standard deviation. (b) Same as (a) for $2{,}000$ two-qubit states drawn from a distribution uniform in purity. In both panels the dashed black line is the copy-counting bound $1-M/N$ of Eq.~(11). (c) Maximum ergotropic efficiency, optimized over $M$, as a function of the number $N$ of available copies, up to $N=1{,}048{,}576$, for the pure (blue) and uniform-purity (green) ensembles. The dashed black line is the oracle limit, Eq.~(11). The dash-dotted red curve is the lower bound predicted by Eq.~(15), which is valid only for large N. In (15) we use $A$ computed from first principles for the uniform-purity ensemble ($\Delta H=0.77$, $\langle 1/W_{\rm erg}\rangle=5.87$, $p_{\rm fail}=0.05$), giving $A\simeq36.5$.}
    \label{fig:figure3}
\end{figure*}

For $M=1024$, corresponding to $64$ copies per projective measurement, the fidelity is already of the order of $0.9$ and continues to approach unity as $M$ increases, reaching values above $0.95$ for some of the largest values of $M$ considered. Remarkably, for the state shown in Fig.~\ref{fig:figure2}, the ergotropic efficiency reaches its maximum, approximately $80\%$, using only a few hundred copies, when the reconstruction fidelity is only about $60\%$. Thus, a nearly faithful reconstruction is not required to achieve optimal thermodynamic performance. Beyond an intermediate value of $M$, the improvement in the inferred unitary no longer compensates for the reduction in the number of copies available for work extraction.

Having examined a particular state, we now turn to general conclusions that do not depend on the specific state.
Fig.~\ref{fig:figure3} shows the average ergotropic efficiency over ensembles of $2{,}000$ two-qubit states. For $N=4096$, both the pure-state ensemble (a) and the ensemble uniform in purity (b) exhibit the non-monotonic dependence on $M$ as seen for the particular case of Fig. \ref{fig:figure2}. However, as a consequence of the averaging, now the curve is smooth. When $M$ is small, the limited measurement statistics yield an inaccurate work-extraction unitary; increasing $M$ initially improves the efficiency, which reaches approximately $80\%$ for pure states and $60\%$ for the uniform-purity ensemble after using only a few hundred copies. Beyond this optimum, the improvement in the inferred unitary is outweighed by the decreasing fraction $1-M/N$ of copies available for work extraction, and the efficiency eventually vanishes at $M=N$. Pure states remain closer to the oracle bound and display a considerably narrower distribution, indicating that the thermodynamically relevant unitary can be inferred more reliably for this ensemble, whereas generic mixed states exhibit a stronger state-to-state dependence. 

In Fig.~\ref{fig:figure3}(c) we show the maximum efficiency as a function of the total number $N$ of available copies for both ensembles. As expected, the efficiency progressively approaches the oracle limit as $N$ increases. Both curves remain above the bound of Eq.~\eqref{eq:Mstar}, here evaluated from first principles rather than fitted to the data, confirming that Theorem 1 holds once $N$ exceeds the regime of validity of the bound. This is consistent with the predicted scaling $M^{\ast}\sim N^{2/3}$ and the corresponding efficiency deficit $1-\eta_{\max}\sim N^{-1/3}$. These results demonstrate that the optimal strategy assigns a nonzero but sublinear fraction of the available resources to learning, reserving the majority of the copies for thermodynamic exploitation.

\textit{Conclusions.} – 
As quantum systems of increasingly large Hilbert-space dimension become experimentally accessible, the efficient use of measurement resources, and of the information they provide, is an increasingly pressing challenge, one that the present work directly addresses. We have introduced the PIFR quantum Maxwell's demon as a minimal operational framework in which the thermodynamic cost of learning a quantum state from finitely many copies is made explicit. Our central result, Theorem 1, is a universal closed-form tradeoff that places the ergotropic efficiency between the copy-counting ceiling $1-M/N$ and a floor controlled by the product of the reconstruction precision $\varepsilon_M$ and the ergotropic susceptibility $\Lambda(\rho,H)$. Our numerical experiments, based on standard quantum state tomography, confirm the general trend predicted by this bound: a non-monotonic dependence of the efficiency on $M$, an optimal allocation $M^*$ growing sublinearly with $N$, and a corresponding approach to the oracle limit as $N$ increases [Fig.~\ref{fig:figure3}(c)]. Standard QST is, however, only one possible reconstruction strategy, and not the most efficient one. Indeed, task-adapted approaches such as threshold quantum state tomography are known to perform at least as well, and typically better, extracting a larger thermodynamic yield from the same number of copies \cite{binosi2024tailor, caruccio2025experimental}. The margin for improvement is largest for the uniform-purity ensemble, where the gap from the oracle limit remains most significant, a direction we plan to pursue.

Several other directions emerge. A fully multi-parameter finite-sample estimation theory, bridging the non-asymptotic bounds of Ref.\cite{meyer2025quantum} and the asymptotic Holevo Cram\'er--Rao theory, would sharpen the prefactor $A$ in Eq.~(15) to its state-specific value [14]. Combining our framework with Hamiltonian learning would cover the case of imperfectly known $H$, along the lines of Ref. \cite{pagliaro2026certifying}. Replacing local extraction by collective unitaries on the $N-M$ unmeasured copies~\cite{alicki2013entanglement,andolina2019extractable,campaioli2024colloquium} could probe whether coherent extraction strategies modify the $N^{-1/3}$ rate. Finally, extending the setup to incoherent or observational notions of ergotropy~\cite{francica2020quantum} would connect the present tradeoff to the resource theory of coherence~\cite{baumgratz2014quantifying}.

\textit{Acknowledgments. --}
G.Guar. kindly acknowledges support from the Ministero dell’Università e della Ricerca (MUR) under the “Rita Levi-Montalcini” grant and to INFN. G.Guar. is also grateful to Emanuele Tumbiolo for insightful discussions. 

\textit{Authors' Contributions. --}
G. Guar. and M. L. conceived the idea and introduced the framework and the key quantities. G. Guar. derived the finite-sample tradeoff bounds and the theoretical results of the supplemental material. All the authors contributed to the numerical simulations. G. Guar. and M. L. wrote the manuscript. 

\bibliography{refs}

\widetext

\appendix

\input{supp_mat}

\end{document}

%% file: supp_mat.tex
\begin{center}
\textbf{\large Supplemental Material:\\
Quantum work extraction from partial information and with finite resources}
\end{center}

\setcounter{equation}{0}
\setcounter{figure}{0}
\setcounter{table}{0}
\setcounter{page}{1}
\makeatletter
\renewcommand{\thesection}{S\arabic{section}}
\renewcommand{\thesubsection}{\thesection.\arabic{subsection}}
\renewcommand{\theequation}{S\arabic{equation}}
\renewcommand{\thefigure}{S\arabic{figure}}

\providecommand{\norm}[1]{\left\lVert #1\right\rVert}
\providecommand{\abs}[1]{\left\lvert #1\right\rvert}
\providecommand{\ketbra}[2]{\left|#1\right\rangle\!\left\langle #2\right|}
\renewcommand{\Tr}{\operatorname{Tr}}

\newtheorem{lemma}{Lemma}
\renewcommand{\thelemma}{S\arabic{lemma}}
\newtheorem{remark}{Remark}
\renewcommand{\theremark}{S\arabic{remark}}

\noindent In this Supplemental Material we collect the technical material behind the results of the main text. Section A fixes notation and conventions and explains the origin of the quantity $\Lambda(\rho,H)$ that we call the ergotropic susceptibility. Section B contains the proof of Lemma 1. Section C derives closed-form expressions for the reconstruction precision $\varepsilon_M$, both from the finite-sample metrological bounds of Meyer \emph{et al.}~\cite{Meyer2025} and from elementary concentration inequalities; there we also discuss in detail the multi-parameter aspects of the problem and their relation to the Holevo Cram\'er--Rao bound. Section D derives the optimal splitting $M^{\ast}\propto N^{2/3}$ of the copy budget and the resulting $N^{-1/3}$ approach to the oracle limit. Section E discusses the sharpness and the scope of the various inequalities, and derives a refinement of Lemma 1 based on eigenvector perturbation theory. Section F quantifies the regime of validity of the main-text bound. Throughout, $\log$ denotes the natural logarithm, and, as in the main text, $\rhoI$ denotes the state reconstructed from the dataset
collected with $M$ copies and measurement design $\Pi$; similarly for $\WI $ and $\etaI $.

\section{A --- Setup, notation, and the ergotropic susceptibility}
\label{sec:setup}

We consider a quantum system with Hilbert space of finite dimension $D$, Hamiltonian
\begin{equation}
H=\sum_{k=1}^{D}\epsilon_{k}\ketbra{\epsilon_{k}}{\epsilon_{k}},\qquad
\epsilon_{1}\le\epsilon_{2}\le\dots\le\epsilon_{D},
\label{eq:S-H}
\end{equation}
and unknown state $\rho$ with spectral decomposition
\begin{equation}
\rho=\sum_{k=1}^{D}r_{k}\ketbra{r_{k}}{r_{k}},\qquad
r_{1}\ge r_{2}\ge\dots\ge r_{D}\ge 0 .
\label{eq:S-rho}
\end{equation}
The spectral spread of the Hamiltonian is $\Delta_{H}:=\epsilon_{D}-\epsilon_{1}$. For a Hermitian operator $A$ with eigenvalues $\lambda_{j}(A)$ we use the Schatten norms
\begin{equation}
\norm{A}_{1}=\sum_{j}\abs{\lambda_{j}(A)},\qquad
\norm{A}_{2}=\Bigl(\sum_{j}\lambda_{j}(A)^{2}\Bigr)^{1/2},\qquad
\norm{A}_{\infty}=\max_{j}\abs{\lambda_{j}(A)} ,
\label{eq:S-norms}
\end{equation}
and for two states the trace distance and the Uhlmann fidelity,
\begin{equation}
\D(\rho,\sigma):=\tfrac12\norm{\rho-\sigma}_{1}\in[0,1],
\qquad
\F(\rho,\sigma):=\Tr\sqrt{\sqrt{\sigma}\,\rho\,\sqrt{\sigma}}\in[0,1] .
\label{eq:S-distances}
\end{equation}

The protocol analysed in the main text rests on four structural assumptions, which we state here once: (A1) the demon holds $N$ independent, identically prepared copies $\rho^{\otimes N}$; (A2) measurements are destructive, so the $M<N$ copies used for reconstruction are lost for work extraction; (A3) the Hamiltonian \eqref{eq:S-H} is known exactly; (A4) work is extracted locally, by applying the same single-copy unitary $\UWI :=U_{W}(\rhoI)$, computed from the reconstructed state, to each of the remaining $N-M$ copies. Two further hypotheses of a more technical nature---a confidence guarantee on the reconstruction, and non-degeneracy of the spectrum of $\rho$---are needed only locally, and we introduce them where they first enter (Secs.~\ref{sec:epsilonM} and \ref{sec:limitations}).

A central role in Theorem~1 of the main text is played by the dimensionless combination
\begin{equation}
\Lambda(\rho,H):=\frac{2\Delta_{H}}{\Werg(\rho,H)} ,
\label{eq:S-Lambda}
\end{equation}
which we refer to as the ergotropic susceptibility. The name is chosen in analogy with linear-response theory. A susceptibility is a state-dependent coefficient converting a small perturbation into the leading-order response of an observable, as the magnetic susceptibility $\chi=\partial M/\partial H|_{H=0}$ converts a weak applied field into a change of magnetization. Lemma~\ref{lem:decomp} of the main text can be written as
\begin{equation}
\frac{\Delta W}{\Werg(\rho,H)}\;\le\;\Lambda(\rho,H)\,\D(\rho,\rhoI) ,
\label{eq:S-susc}
\end{equation}
so that $\Lambda$ plays precisely this role: it is (an upper bound on) the linear-response coefficient of the relative work deficit with respect to the reconstruction error, measured in trace distance. States with $\Lambda\gg1$ are fragile---a small error in the reconstruction destroys a large fraction of the extractable work, as happens for nearly passive states, where $\Werg\to0$ at fixed $\Delta_H$---whereas states with $\Lambda$ of order one are robust. Equivalently, as shown in Sec.~\ref{sec:regime}, $1/\Lambda(\rho,H)$ is the largest reconstruction error compatible with a nontrivial guarantee on the efficiency.

\section{B --- Proof of Lemma 1}
\label{sec:proof-lemma}

We prove the ergotropic-deficit bound
\begin{equation}
\Delta W:=\Werg(\rho,H)-\WI\;\le\;2\Delta_{H}\,\D(\rho,\rhoI) .
\label{eq:S-lemma1}
\end{equation}
Four density matrices enter the argument, and it is worth distinguishing them carefully at the outset: the true state $\rho$ of Eq.~\eqref{eq:S-rho}; the reconstructed state $\rhoI=\sum_{k}\rrec{k}\ketbra{\rrec{k}}{\rrec{k}}$, with $\rrec{1}\ge \rrec{2}\ge\dots$; the passive state of the true state,
\begin{equation}
\sigma_{\star}:=U_{W}(\rho)\,\rho\,U_{W}(\rho)^{\dagger}=\sum_{k}r_{k}\ketbra{\epsilon_{k}}{\epsilon_{k}} ;
\label{eq:S-sigmastar}
\end{equation}
and the state actually produced by the demon,
\begin{equation}
\sigmaI :=\UWI \,\rho\,\UWIdag  .
\label{eq:S-sigmaI}
\end{equation}
Note that $\sigmaI $ is built from the true state $\rho$, not from $\rhoI$: the demon applies its inferred unitary to actual copies of $\rho$. The mismatch between $\sigmaI $ and the passive state is the whole source of the deficit.

\begin{lemma}[Two-term decomposition]\label{lem:decomp}
With the notation above,
\begin{equation}
\Delta W=\underbrace{\Tr\!\bigl[\UWIdag H\UWI \,(\rho-\rhoI)\bigr]}_{\mathrm{(a)}}
+\underbrace{\bigl[\Emin(\rhoI)-\Emin(\rho)\bigr]}_{\mathrm{(b)}} .
\label{eq:S-twoterm}
\end{equation}
\end{lemma}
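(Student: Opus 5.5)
The identity in Lemma~\ref{lem:decomp} is an exact algebraic rearrangement, so I will prove it by direct computation: expand $\Delta W$ from the definitions, then add and subtract $\Emin(\rhoI)$. No inequality enters here. The only nontrivial input is that the reconstructed unitary $\UWI=U_W(\rhoI)$ is, by construction, the ergotropic unitary of $\rhoI$. Hence it attains the minimum in Eq.~\eqref{eq:Emin} for $\rhoI$:
\begin{equation*}
\Emin(\rhoI)=\Tr\!\bigl[\UWI\,\rhoI\,\UWIdag H\bigr]=\Tr\!\bigl[\UWIdag H\UWI\,\rhoI\bigr].
\end{equation*}
This holds because $\UWI\rhoI\UWIdag=\sum_k\rrec{k}\ketbra{\epsilon_k}{\epsilon_k}$ is the passive state of $\rhoI$. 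It therefore achieves $\sum_k \rrec{k}\epsilon_k$, which is the minimum by the standard rearrangement (von Neumann trace) inequality underlying Eq.~\eqref{eq:ergotropy}.

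\textbf{Step 1: express $\Delta W$ through $\sigmaI$.} From Eq.~\eqref{eq:ergotropy}, $\Werg(\rho,H)=\Tr[\rho H]-\Emin(\rho)$. From Eq.~\eqref{eq:WMPi}, $\WI=\Tr[\rho H]-\Tr[\sigmaI H]$. Subtracting, the $\Tr[\rho H]$ terms cancel and
\begin{equation*}
\Delta W=\Tr[\sigmaI H]-\Emin(\rho).
\end{equation*}

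\textbf{Step 2: insert $\Emin(\rhoI)$.} Adding and subtracting it gives
\begin{equation*}
\Delta W=\bigl\{\Tr[\sigmaI H]-\Emin(\rhoI)\bigr\}+\bigl\{\Emin(\rhoI)-\Emin(\rho)\bigr\}.
\end{equation*}
The second brace is term (b). For the first brace, use cyclicity of the trace to write $\Tr[\sigmaI H]=\Tr[\UWIdag H\UWI\,\rho]$. Replace $\Emin(\rhoI)$ by $\Tr[\UWIdag H\UWI\,\rhoI]$ using the identity displayed above. By linearity the first brace then equals $\Tr[\UWIdag H\UWI(\rho-\rhoI)]$, which is term (a).

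\textbf{Main obstacle.} The only delicate point is justifying $\Emin(\rhoI)=\Tr[\UWI\rhoI\UWIdag H]$ when $\rhoI$ or $H$ has degenerate spectrum, since $U_W$ is then not unique. Any choice of eigenbasis ordering compatible with the decreasing eigenvalues still yields the passive state $\sum_k\rrec{k}\ketbra{\epsilon_k}{\epsilon_k}$, so the identity holds for every admissible choice. I would state this explicitly. Finally, I note in passing that term (b) is not sign-definite, while term (a) need not be nonnegative either. Their sum is nonnegative, as required by Eq.~\eqref{eq:upperWMPi}.
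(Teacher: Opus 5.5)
Your proof is correct and follows the paper's argument step for step. You reduce $\Delta W$ to $\mathrm{Tr}[H\sigma_{M,\Pi}]-E_{\mathrm{min}}(\rho)$, add and subtract $E_{\mathrm{min}}(\rho_{M,\Pi})$, and then use that $U_W^{(M,\Pi)}$ attains $E_{\mathrm{min}}(\rho_{M,\Pi})$, together with cyclicity of the trace. Your rearrangement-inequality justification of that attainment and your remark on degenerate spectra are harmless refinements of the paper's ``by construction'' step.
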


\begin{proof}
From Eqs.~(2) and (7) of the main text, $\Werg(\rho,H)=\Tr[\rho H]-\Emin(\rho)$ and $\WI=\Tr[\rho H]-\Tr[H\sigmaI ]$, so that
\begin{equation}
\Delta W=\Tr[H\sigmaI ]-\Emin(\rho) .
\label{eq:S-DW-first}
\end{equation}
Adding and subtracting $\Emin(\rhoI)$,
\begin{equation}
\Delta W=\bigl\{\Tr[H\sigmaI ]-\Emin(\rhoI)\bigr\}+\bigl\{\Emin(\rhoI)-\Emin(\rho)\bigr\} .
\label{eq:S-DW-split}
\end{equation}
The second brace is term (b). For the first, recall that $\UWI =U_{W}(\rhoI)$ is, by construction, the unitary that minimises the final energy \emph{of the reconstructed state}: by Eqs.~(3)--(4) of the main text,
\begin{equation}
\Tr\!\bigl[H\,\UWI \,\rhoI\,\UWIdag \bigr]=\Emin(\rhoI) .
\label{eq:S-Emin-rhoI}
\end{equation}
The argument of the conjugation in Eq.~\eqref{eq:S-Emin-rhoI} must be $\rhoI$ and not $\rho$; $\UWI $ is optimal for the state the demon believes it holds, not for the state it actually holds. Substituting Eqs.~\eqref{eq:S-sigmaI} and \eqref{eq:S-Emin-rhoI} into the first brace of Eq.~\eqref{eq:S-DW-split} and using cyclicity of the trace,
\begin{equation}
\Tr[H\sigmaI ]-\Emin(\rhoI)
=\Tr\!\bigl[H \UWI \,\rho\,\UWIdag \bigr]-\Tr\!\bigl[H \UWI \,\rhoI\,\UWIdag \bigr]
=\Tr\!\bigl[\UWIdag H\UWI \,(\rho-\rhoI)\bigr],
\end{equation}
which is term (a).
\end{proof}

Neither term in Eq.~\eqref{eq:S-twoterm} is separately of definite sign, but their sum is non-negative because $\sigma_{\star}$ minimises $\Tr[HU\rho U^{\dagger}]$ over all unitaries; this recovers Eq.~(9) of the main text.

\begin{lemma}[Lipschitz continuity of the minimum energy]\label{lem:lipschitz}
Let $\Emin(\varrho):=\min_{U\in\mathcal{U}(D)}\Tr[U\varrho\, U^{\dagger}H]$, as in Eq.~(3) of the main text. Then, for any two states $\rho_{1},\rho_{2}$,
\begin{equation}
\abs{\Emin(\rho_{1})-\Emin(\rho_{2})}\le\Delta_{H}\,\D(\rho_{1},\rho_{2}) .
\label{eq:S-lipsch}
\end{equation}
\end{lemma}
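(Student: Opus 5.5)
The plan is to exploit the fact that $\Emin$ is a pointwise minimum of functions that are each Lipschitz in the state, with a uniform constant. For every fixed unitary $U$, set $f_U(\varrho):=\Tr[U\varrho U^{\dagger}H]$; this is linear in $\varrho$. A minimum (or infimum) of functions that share a Lipschitz constant inherits that constant. So it is enough to show that every $f_U$ is $\Delta_H$-Lipschitz with respect to the trace distance.

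\emph{Step 1 (one unitary).} For states $\rho_1,\rho_2$, write $X:=\rho_1-\rho_2$. It is Hermitian and traceless. Because $\Tr X=0$, we may replace $H$ by the centred operator $\tilde H:=H-\tfrac12(\epsilon_1+\epsilon_D)\mathbb 1$ without changing anything:
\[
f_U(\rho_1)-f_U(\rho_2)=\Tr\!\bigl[U^{\dagger}\tilde H U\,X\bigr].
\]
The eigenvalues of $\tilde H$ lie in $[-\Delta_H/2,\Delta_H/2]$, so $\norm{U^{\dagger}\tilde HU}_\infty=\Delta_H/2$. H\"older's inequality then gives
\[
\abs{f_U(\rho_1)-f_U(\rho_2)}\le\tfrac{\Delta_H}{2}\norm{X}_1=\Delta_H\,\D(\rho_1,\rho_2).
\]
The centring is the one point that needs care: applying H\"older directly to $H$ would give the weaker constant $2\max_k\abs{\epsilon_k}$ and would not be invariant under shifts of $H$.

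\emph{Step 2 (passing to the minimum).} The unitary group is compact and $f_U$ is continuous in $U$, so the minimum defining $\Emin$ is attained. Let $U_2$ be a minimiser for $\rho_2$. Then
\[
\Emin(\rho_1)\le f_{U_2}(\rho_1)\le f_{U_2}(\rho_2)+\Delta_H\D(\rho_1,\rho_2)=\Emin(\rho_2)+\Delta_H\D(\rho_1,\rho_2).
\]
Exchanging the roles of $\rho_1$ and $\rho_2$ gives the reverse inequality, and together they yield Eq.~\eqref{eq:S-lipsch}. Existence of the minimiser is not actually needed; with infima the same argument goes through using an $\varepsilon$-approximate minimiser. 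It is also available explicitly as the ergotropic unitary of Eq.~(4).

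As a cross-check, one can use the explicit form $\Emin(\varrho)=\sum_k\lambda_k^{\downarrow}(\varrho)\epsilon_k$ together with a Lidskii/Mirsky-type majorization bound on eigenvalue differences. This route is more cumbersome, and I do not expect it to give a better constant, so the variational argument above is the one I would write out. No step is a real obstacle; the only delicate point is the centring in Step 1, which is what produces the sharp factor $\Delta_H$ instead of $2\norm{H}_\infty$.
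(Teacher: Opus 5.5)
Your proposal is correct and follows the paper's proof essentially step for step. The paper also uses the minimiser for $\rho_2$ as a trial unitary for $\rho_1$. It then shifts the isospectral operator $U_2^{\dagger}HU_2$ by $\tfrac12(\epsilon_1+\epsilon_D)$ times the identity, which is allowed because $\rho_1-\rho_2$ is traceless, applies H\"older's inequality to obtain $\Delta_H\,\D(\rho_1,\rho_2)$, and finishes by exchanging the roles of the two states.
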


\begin{proof}
Let $U_{j}$ achieve the minimum for $\rho_{j}$. Using $U_{2}$ as a trial unitary for $\rho_{1}$,
\begin{equation}
\Emin(\rho_{1})-\Emin(\rho_{2})
\le\Tr[U_{2}\rho_{1}U_{2}^{\dagger}H]-\Tr[U_{2}\rho_{2}U_{2}^{\dagger}H]
=\Tr\!\bigl[A\,(\rho_{1}-\rho_{2})\bigr],\qquad A:=U_{2}^{\dagger}HU_{2} .
\label{eq:S-lipsch-step}
\end{equation}
The operator $A$ has the same spectrum as $H$, so that $\norm{A-\alpha\mathbb{I}}_{\infty}=\Delta_{H}/2$ for the centring $\alpha=(\epsilon_{1}+\epsilon_{D})/2$. Since $\rho_{1}-\rho_{2}$ is traceless, we may substitute $A\to A-\alpha\mathbb{I}$ in Eq.~\eqref{eq:S-lipsch-step} at no cost, and H\"older's inequality $\abs{\Tr[XY]}\le\norm{X}_{\infty}\norm{Y}_{1}$ gives
\begin{equation}
\Emin(\rho_{1})-\Emin(\rho_{2})\le\tfrac{\Delta_{H}}{2}\norm{\rho_{1}-\rho_{2}}_{1}=\Delta_{H}\,\D(\rho_{1},\rho_{2}) .
\end{equation}
Exchanging the roles of $\rho_{1}$ and $\rho_{2}$ yields the reverse inequality.
\end{proof}

The proof of Eq.~\eqref{eq:S-lemma1} is now immediate. Term (a) of Lemma~\ref{lem:decomp} involves the operator $\UWIdag H\UWI $, again isospectral to $H$; the same centring-plus-H\"older argument used in Lemma~\ref{lem:lipschitz} gives $\abs{\mathrm{(a)}}\le\Delta_{H}\,\D(\rho,\rhoI)$. Term (b) is bounded by Lemma~\ref{lem:lipschitz} with $\rho_{1}=\rhoI$, $\rho_{2}=\rho$, giving $\abs{\mathrm{(b)}}\le\Delta_{H}\,\D(\rho,\rhoI)$. Adding the two contributions proves Lemma~\ref{lem:decomp}. \hfill$\square$

\begin{remark}
Equation~\eqref{eq:S-lemma1} may be regarded as the ergotropic analogue of the Fannes--Audenaert continuity bound for the von Neumann entropy~\cite{Fannes1973,Audenaert2007}: both control the variation of a spectral functional under trace-norm perturbations, with a constant fixed by a global scale ($\Delta_{H}$ here, $\log D$ there). As for Fannes-type bounds, the linear dependence on $\D$ is tight up to a constant for generic spectra but degrades near spectral degeneracies of $\rho$; a refinement adapted to well-gapped spectra is given in Sec.~\ref{sec:limitations}.
\end{remark}

\section{C --- The reconstruction precision \texorpdfstring{$\varepsilon_M$}{eps\_M}}
\label{sec:epsilonM}

Lemma~\ref{lem:decomp} reduces the thermodynamic problem to an inference problem: bounding the trace distance $\D(\rho,\rhoI)$ attainable with $M$ copies. We formalise the interface between the two problems through the following hypothesis on the reconstruction strategy $\mathcal{R}$:
\begin{quote}
(A5) For every failure probability $p_{\mathrm{fail}}\in(0,1)$ there exists $\varepsilon_{M}=\varepsilon_{M}(\mathcal{R},p_{\mathrm{fail}})$ such that $\D(\rho,\rhoI)\le\varepsilon_{M}$ with probability at least $1-p_{\mathrm{fail}}$ over the measurement outcomes.
\end{quote}
Theorem~1 of the main text holds verbatim for any strategy obeying (A5), whatever the value of $\varepsilon_{M}$; the purpose of this section is to compute what $\varepsilon_{M}$ can be. We give two complementary answers. The first (Secs.~\ref{sec:FvdG}--\ref{sec:multi-MF}) starts from the finite-sample metrological bounds of Ref.~\cite{Meyer2025} and produces the fundamental, strategy-independent floor for $\varepsilon_M$; since full state reconstruction is a $(D^{2}-1)$-parameter estimation problem, this requires some care with the incompatibility issues characteristic of multi-parameter quantum metrology, which we discuss against the Holevo Cram\'er--Rao theory~\cite{Helstrom1976,Holevo1976,YangChiribellaHayashi2019,AlbarelliFrielDatta2019,BelliardoGiovannetti2021}. The second answer (Sec.~\ref{sec:BHC}) uses only classical concentration inequalities and applies to the concrete tomographic protocols employed in the main text~\cite{Pagliaro2026,Zambrano2026,Mortimer2026}; it is weaker in its dimensional prefactors but fully explicit at finite $M$.

\subsection{From fidelity to trace distance}
\label{sec:FvdG}

The Fuchs--van de Graaf inequalities~\cite{Fuchs1999} state that
\begin{equation}
1-\F(\rho,\sigma)\;\le\;\D(\rho,\sigma)\;\le\;\sqrt{1-\F(\rho,\sigma)^{2}} .
\label{eq:S-FvdG}
\end{equation}
Since $1-\F^{2}=(1-\F)(1+\F)\le2(1-\F)$, the upper bound implies
\begin{equation}
\D(\rho,\rhoI)^{2}\;\le\;2\,\bigl[1-\F(\rho,\rhoI)\bigr] ,
\label{eq:S-D-from-F}
\end{equation}
an exact inequality, valid for all values of the fidelity. Equation~\eqref{eq:S-D-from-F} converts any infidelity guarantee into a trace-distance guarantee, and is the only bridge between the two metrics that we shall need.

\subsection{Finite-sample bound for a single parameter}
\label{sec:single-MF}

Consider first a smooth one-parameter family $t\mapsto\rho(t)$. An estimation protocol is a POVM $\{Q(\tau)\}$ acting on $\rho(t)^{\otimes M}$ and returning an estimate $\tau$; for a tolerance $\delta>0$ its (minimax) success probability is
\begin{equation}
\eta(\delta,Q):=\inf_{t}\int d\tau\, w_{\delta}(t-\tau)\,\Tr\!\bigl[\rho(t)^{\otimes M}Q(\tau)\bigr],
\qquad w_{\delta}(x)=\mathbf{1}_{\{\abs{x}\le\delta\}} ,
\label{eq:S-successprob}
\end{equation}
and $\eta^{\ast}(\delta):=\sup_{Q}\eta(\delta,Q)$. Corollary~2 of Ref.~\cite{Meyer2025} provides the fundamental limit
\begin{equation}
1-\eta^{\ast}(\delta)\;\ge\;\tfrac14\,\sup_{\abs{t-t'}>2\delta}\F\bigl(\rho(t),\rho(t')\bigr)^{2M} .
\label{eq:S-MF-cor2}
\end{equation}
Writing $p:=1-\eta^{\ast}(\delta)$ for the failure probability and taking logarithms of Eq.~\eqref{eq:S-MF-cor2},
\begin{equation}
\log\frac{1}{4p}\;\le\;2M\inf_{\abs{t-t'}>2\delta}\bigl[-\log\F\bigl(\rho(t),\rho(t')\bigr)\bigr] .
\label{eq:S-MF-log}
\end{equation}
The quantity $-\log\F$ is controlled, for small separations, by the quantum Fisher information (QFI). With the standard convention $\partial_{t}\rho=\tfrac12\{L,\rho\}$, $\mathcal{F}_{Q}=\Tr[\rho L^{2}]$, the fidelity between neighbouring states expands as~\cite{BraunsteinCaves1994,Hubner1992}
\begin{equation}
\F\bigl(\rho(t),\rho(t+\tau)\bigr)=1-\frac{\mathcal{F}_{Q}(t)}{8}\,\tau^{2}+\mathcal{O}(\tau^{3})
\qquad\Longrightarrow\qquad
-\log\F=\frac{\mathcal{F}_{Q}(t)}{8}\,\tau^{2}+\mathcal{O}(\tau^{3}) .
\label{eq:S-F-expansion}
\end{equation}
Since $\F$ decreases with the separation, the infimum in Eq.~\eqref{eq:S-MF-log} is attained at the boundary $\abs{t-t'}=2\delta$; retaining the leading term of Eq.~\eqref{eq:S-F-expansion}---the validity of this truncation is discussed in Sec.~\ref{sec:limitations}---one finds $-\log\F=\mathcal{F}_{Q}\,\delta^{2}/2$, hence $\log(1/4p)\le M\mathcal{F}_{Q}\delta^{2}$ and
\begin{equation}
\delta\;\ge\;\delta^{\ast}_{M}(p):=\sqrt{\frac{\log[1/(4p)]}{M\,\mathcal{F}_{Q}}}\, .
\label{eq:S-MF-single}
\end{equation}
Equation~\eqref{eq:S-MF-single} is a converse (impossibility) statement: no protocol can guarantee a tolerance below $\delta^{\ast}_{M}(p)$ at confidence $1-p$. Reference~\cite{Meyer2025} also establishes achievability of this scaling, with modified constants, so we shall use $\delta^{\ast}_{M}(p)$ as the benchmark tolerance of an optimal single-parameter strategy; the caveats attached to this reading in the multi-parameter setting are spelled out below.

\subsection{The multi-parameter problem: union bound versus Holevo bound}
\label{sec:multi-MF}

Full state reconstruction is the simultaneous estimation of $D^{2}-1$ real parameters, say generalised Bloch coordinates $\bm{t}\in\mathbb{R}^{D^{2}-1}$, with quantum Fisher information matrix (QFIM)
\begin{equation}
\bigl[\mathcal{F}_{Q}\bigr]_{ij}
=\Tr\!\Bigl[\rho\,\frac{L_{i}L_{j}+L_{j}L_{i}}{2}\Bigr]
=2\sum_{k,l\,:\,p_{k}+p_{l}>0}\frac{\mathrm{Re}\bigl[\bra{k}\partial_{i}\rho\ket{l}\bra{l}\partial_{j}\rho\ket{k}\bigr]}{p_{k}+p_{l}} ,
\label{eq:S-QFIM}
\end{equation}
where $\rho=\sum_{k}p_{k}\ketbra{k}{k}$ and $\partial_{i}\rho=\tfrac12\{L_{i},\rho\}$. We assume $\rho$ of full rank, so that the model is regular and $\mathcal{F}_{Q}\succ0$; rank-deficient states can be treated by restriction to their support. The multi-parameter generalisation of Eq.~\eqref{eq:S-F-expansion} reads
\begin{equation}
1-\F\bigl(\rho(\bm t),\rho(\bm t+d\bm t)\bigr)=\tfrac18\,d\bm t^{\top}\mathcal{F}_{Q}\,d\bm t+\mathcal{O}(\abs{d\bm t}^{3}) ,
\label{eq:S-F-QFIM}
\end{equation}
which is the statement that the QFIM is (four times) the metric tensor of the Bures geometry, $d_{B}^{2}=2(1-\F)$ and $d_{B}^{2}=\tfrac14 d\bm t^{\top}\mathcal{F}_{Q}d\bm t$ infinitesimally~\cite{BraunsteinCaves1994,Hubner1992}.

It is a well-known feature of multi-parameter quantum estimation that the parameters cannot, in general, be treated independently: the optimal measurements for different parameters need not commute, and the single-parameter bound~\eqref{eq:S-MF-single} cannot simply be saturated along every direction at once with the same copies~\cite{Helstrom1976,Holevo1976}. We therefore proceed in two stages: first a derivation based on the union bound, which is elementary and delivers a fully non-asymptotic guarantee, and then a comparison with the Holevo Cram\'er--Rao bound, which is the sharp asymptotic theory and quantifies exactly what the elementary route gives away.

Diagonalise the QFIM, $\mathcal{F}_{Q}=\mathrm{diag}(\lambda_{1},\dots,\lambda_{D^{2}-1})$ with $\lambda_{i}>0$, and take the parameters $t_{i}$ along its eigendirections, so that the QFI of the one-dimensional subfamily along direction $i$ is $\lambda_{i}$. Suppose the strategy meets the single-parameter benchmark~\eqref{eq:S-MF-single} along each direction, with an individual failure probability $p_{i}$:
\begin{equation}
\Pr\bigl[\abs{\hat t_{i}-t_{i}}>\delta_{i}\bigr]\le p_{i},
\qquad
\delta_{i}=\sqrt{\frac{\log[1/(4p_{i})]}{M\lambda_{i}}}\, .
\label{eq:S-single-dir}
\end{equation}
This is the one genuinely non-trivial hypothesis of the derivation: it holds exactly when the symmetric logarithmic derivatives $L_{i}$ commute (as for pure states, or for classically correlated families), and it holds asymptotically in $M$ for arbitrary states, because collective measurements on $\rho^{\otimes M}$ attain the Holevo bound~\cite{YangChiribellaHayashi2019} and with it, up to a state-dependent factor discussed below, the per-direction SLD performance. We flag it explicitly as hypothesis (H) and return to it at the end of the subsection.

By the union bound (Boole's inequality), which for any events $A_{1},\dots,A_{n}$ states
\begin{equation}
\Pr\Bigl[\,\bigcup_{i=1}^{n}A_{i}\Bigr]\le\sum_{i=1}^{n}\Pr[A_{i}] ,
\label{eq:S-union}
\end{equation}
applied to $A_{i}=\{\abs{\hat t_{i}-t_{i}}>\delta_{i}\}$, all $D^{2}-1$ estimates are simultaneously within tolerance with probability at least $1-\sum_{i}p_{i}$. Splitting the failure budget uniformly,
\begin{equation}
p_{i}=\frac{p_{\mathrm{fail}}}{D^{2}-1}
\qquad\Longrightarrow\qquad
\delta_{i}^{2}=\frac{\log\!\bigl[(D^{2}-1)/(4p_{\mathrm{fail}})\bigr]}{M\lambda_{i}}\quad\text{for all }i ,
\label{eq:S-deltai}
\end{equation}
jointly with probability at least $1-p_{\mathrm{fail}}$. The uniform split is not merely convenient: on the joint success event the relevant error functional is, by Eq.~\eqref{eq:S-F-QFIM},
\begin{equation}
1-\F(\rho,\rhoI)\;\le\;\tfrac18\sum_{i=1}^{D^{2}-1}\lambda_{i}\,(\hat t_{i}-t_{i})^{2}
\;\le\;\tfrac18\sum_{i=1}^{D^{2}-1}\lambda_{i}\,\delta_{i}^{2}
\;=\;\tfrac18\sum_{i=1}^{D^{2}-1}\frac{\log[1/(4p_{i})]}{M} ,
\label{eq:S-1mF}
\end{equation}
in which the QFIM eigenvalues have cancelled identically. Minimising $\sum_{i}\log(1/p_{i})$ over the allocation subject to $\sum_{i}p_{i}=p_{\mathrm{fail}}$ gives, by symmetry (or a one-line Lagrange computation, $\partial_{p_{i}}[-\log p_{i}+\mu p_{i}]=0\Rightarrow p_{i}=1/\mu$), precisely the uniform allocation~\eqref{eq:S-deltai}, which is therefore optimal within this scheme. Combining Eqs.~\eqref{eq:S-1mF}, \eqref{eq:S-deltai} and the Fuchs--van de Graaf bridge~\eqref{eq:S-D-from-F},
\begin{equation}
\D(\rho,\rhoI)^{2}\le2\bigl[1-\F(\rho,\rhoI)\bigr]\le\frac{(D^{2}-1)\log\!\bigl[(D^{2}-1)/(4p_{\mathrm{fail}})\bigr]}{4M} ,
\end{equation}
that is,
\begin{equation}
\varepsilon_{M}\;=\;\frac12\,\sqrt{\frac{(D^{2}-1)\,\log\!\bigl[(D^{2}-1)/(4p_{\mathrm{fail}})\bigr]}{M}}\, ,
\label{eq:S-epsilonM}
\end{equation}
valid with confidence $1-p_{\mathrm{fail}}$ and to leading order in $1/M$. Together with Theorem~1 of the main text this gives the benchmark lower envelope
\begin{equation}
\etaI\;\ge\;\Bigl(1-\frac{M}{N}\Bigr)
\biggl[\,1-\frac{\Lambda(\rho,H)}{2}\sqrt{\frac{(D^{2}-1)\log[(D^{2}-1)/(4p_{\mathrm{fail}})]}{M}}\,\biggr]_{+} .
\label{eq:S-envelope-MF}
\end{equation}

We now confront this elementary derivation with the sharp theory. In multi-parameter quantum estimation the attainable asymptotic precision is governed not by the SLD quantities entering Eq.~\eqref{eq:S-single-dir} but by the Holevo Cram\'er--Rao bound~\cite{Holevo1976}: for any weight matrix $W\succeq0$ and any (possibly collective, entangled) measurement on $\rho^{\otimes M}$ with locally unbiased estimator $\hat{\bm t}$,
\begin{equation}
M\,\Tr\!\bigl[W\,\mathrm{Cov}(\hat{\bm t})\bigr]\;\ge\;
C^{H}(W,\rho):=\min_{\{X_{i}\}}
\Bigl\{\Tr\!\bigl[W^{1/2}\,\mathrm{Re}\,Z\,W^{1/2}\bigr]
+\bigl\|W^{1/2}\,\mathrm{Im}\,Z\,W^{1/2}\bigr\|_{1}\Bigr\} ,
\label{eq:S-Holevo}
\end{equation}
where $Z_{ij}:=\Tr[\rho X_{i}X_{j}]$ and the minimisation runs over Hermitian collections satisfying the local-unbiasedness constraints $\Tr[\partial_{i}\rho\,X_{j}]=\delta_{ij}$. The Holevo functional is sandwiched between the SLD value and twice it,
\begin{equation}
\Tr\!\bigl[W\mathcal{F}_{Q}^{-1}\bigr]\;\le\;C^{H}(W,\rho)\;\le\;2\,\Tr\!\bigl[W\mathcal{F}_{Q}^{-1}\bigr] ,
\label{eq:S-Holevo-sandwich}
\end{equation}
with equality on the left if and only if the SLD-incompatibility obstruction vanishes (in particular for pure states or commuting SLDs), and the factor $2$ on the right being the universal worst case~\cite{AlbarelliFrielDatta2019,BelliardoGiovannetti2021}. Moreover, $C^{H}$ is attainable in the limit of collective measurements on asymptotically many copies~\cite{YangChiribellaHayashi2019}, which is what singles it out as \emph{the} multi-parameter benchmark.

For our problem the natural weight is $W=\mathcal{F}_{Q}$, since by Eq.~\eqref{eq:S-F-QFIM} the Bures infidelity is precisely the $\mathcal{F}_{Q}$-weighted quadratic error. Taking expectations of Eq.~\eqref{eq:S-F-QFIM} and using Eq.~\eqref{eq:S-Holevo},
\begin{equation}
M\;\mathbb{E}\bigl[1-\F(\rho,\rhoI)\bigr]\;\simeq\;\frac{M}{8}\,\Tr\!\bigl[\mathcal{F}_{Q}\,\mathrm{Cov}(\hat{\bm t})\bigr]
\;\ge\;\frac{C^{H}(\mathcal{F}_{Q},\rho)}{8}
\;\ge\;\frac{D^{2}-1}{8} ,
\label{eq:S-Holevo-infid}
\end{equation}
where the last step used $\Tr[\mathcal{F}_{Q}\mathcal{F}_{Q}^{-1}]=D^{2}-1$. Defining the incompatibility ratio of the tomographic model
\begin{equation}
R(\rho):=\frac{C^{H}(\mathcal{F}_{Q},\rho)}{D^{2}-1}\;\in\;[1,2] ,
\label{eq:S-Rratio}
\end{equation}
the asymptotically optimal collective strategy achieves $\mathbb{E}[1-\F]=R(\rho)(D^{2}-1)/(8M)\,[1+o(1)]$, and hence, through Eq.~\eqref{eq:S-D-from-F}, a typical trace-distance error
\begin{equation}
\varepsilon_{M}^{\mathrm{Hol}}\;=\;\Theta\!\left(\sqrt{\frac{R(\rho)\,(D^{2}-1)}{M}}\,\right) .
\label{eq:S-Holevo-rate}
\end{equation}

The comparison of Eqs.~\eqref{eq:S-epsilonM} and \eqref{eq:S-Holevo-rate} is instructive on three counts. First, the two routes agree on the scaling $\sqrt{(D^{2}-1)/M}$, which is what feeds into the optimisation of Sec.~\ref{sec:optimal}; the union-bound route pays an extra $\sqrt{\log(D^{2}-1)}$, which is the familiar and, in general, unavoidable price of upgrading an expectation-value statement to an explicit high-confidence tail bound (the same logarithm appears in the sample-complexity analyses of quantum tomography~\cite{HHJLW2017}). Second, the union bound is agnostic about measurement incompatibility, which in the sharp theory surfaces as the factor $R(\rho)\le2$ of Eq.~\eqref{eq:S-Rratio}; at the level of Eq.~\eqref{eq:S-epsilonM} this uncertainty is invisible because it affects the constant, not the scaling. Third, the Holevo rate~\eqref{eq:S-Holevo-rate} is an asymptotic statement about expectations, and turning it into a finite-$M$, high-probability guarantee of the form (A5)---with the correct constant $R(\rho)$ and without the logarithm---remains, to the best of our knowledge, an open problem in multi-parameter quantum metrology, the finite-sample techniques of Ref.~\cite{Meyer2025} being at present fully developed only in the single-parameter case. In this light Eq.~\eqref{eq:S-epsilonM} should be read as the sharpest \emph{non-asymptotic} benchmark currently derivable, correct in its dimensional scaling, and conservative by at most a logarithm and an $\mathcal{O}(1)$ incompatibility factor.

Two final remarks on hypothesis (H). One could bypass it altogether by splitting the copies into $D^{2}-1$ disjoint batches of $M/(D^{2}-1)$ and estimating one direction per batch; Eq.~\eqref{eq:S-single-dir} then holds by construction, at the price of replacing $M\to M/(D^{2}-1)$ throughout, i.e.\ an extra factor $\sqrt{D^{2}-1}$ in Eq.~\eqref{eq:S-epsilonM}. This sequential scheme is strictly wasteful---collective strategies provably beat it asymptotically~\cite{YangChiribellaHayashi2019}---but it shows that even the weakest reading of our assumptions degrades the bound only polynomially in $D$, leaving the $M^{-1/2}$ dependence, and with it all the conclusions of Sec.~\ref{sec:optimal}, untouched. Conversely, for the states for which the SLDs commute, (H) holds at finite $M$ and Eq.~\eqref{eq:S-epsilonM} is an honest finite-sample statement with no hidden constants.

\subsection{Concentration bounds for laboratory protocols}
\label{sec:BHC}

The bound of the previous subsection is a fundamental benchmark, independent of the reconstruction strategy. For the specific protocols implemented in the main text (Pauli-basis quantum state tomography and its threshold variant) an entirely elementary route, based on classical concentration, yields explicit certificates of the type (A5); this is the approach taken, in the context of ergotropy and many-body property certification, in Refs.~\cite{Pagliaro2026,Zambrano2026,Mortimer2026}, whose multinomial version is the Bretagnolle--Huber--Carol inequality~\cite{BHC}.

Suppose the reconstruction measures $K$ observables $O_{1},\dots,O_{K}$ with spectra in $[-1,1]$, devoting $M_{i}$ copies to $O_{i}$, $\sum_{i}M_{i}=M$, and let $\hat o_{i}$ be the empirical mean of the outcomes for $O_{i}$. Hoeffding's inequality~\cite{Hoeffding1963} for i.i.d.\ variables bounded in $[-1,1]$ gives, for each $i$ and every $s>0$,
\begin{equation}
\Pr\bigl[\abs{\hat o_{i}-\Tr(\rho O_{i})}\ge s\bigr]\le2\,e^{-M_{i}s^{2}/2} .
\label{eq:S-Hoeffding}
\end{equation}
Setting the right-hand side equal to $p_{\mathrm{fail}}/K$, i.e.\ $s_{i}=\sqrt{2\log(2K/p_{\mathrm{fail}})/M_{i}}$, and applying the union bound~\eqref{eq:S-union} over the $K$ settings,
\begin{equation}
\abs{\hat o_{i}-\Tr(\rho O_{i})}\le\sqrt{\frac{2\log(2K/p_{\mathrm{fail}})}{M_{i}}}
\qquad\text{for all }i\text{ simultaneously, with probability}\ge1-p_{\mathrm{fail}} .
\label{eq:S-Hoeffding-joint}
\end{equation}
For $n$-qubit Pauli tomography one has $D=2^{n}$, $K=3^{n}$ settings and, with the uniform allocation used in the main text, $M_{i}=M/3^{n}$. Denote by $\hat{\bm o}$ and $\bm o$ the vectors of estimated and true expectation values, and by $\rhoI=\mathcal{L}(\hat{\bm o})$ the linear-inversion estimate, with $\mathcal{L}$ the (linear) reconstruction map of the informationally complete frame. Then $\rhoI-\rho=\mathcal{L}(\hat{\bm o}-\bm o)$, and denoting by $\kappa$ the operator norm of $\mathcal{L}$ from $\ell^{2}(\mathbb{R}^{K})$ to Hilbert--Schmidt space (the inverse smallest singular value of the frame operator),
\begin{equation}
\norm{\rhoI-\rho}_{2}\le\kappa\,\norm{\hat{\bm o}-\bm o}_{2}\le\kappa\sqrt{K}\,\max_{i}\abs{\hat o_{i}-o_{i}} ,
\qquad
\norm{\rhoI-\rho}_{1}\le\sqrt{D}\,\norm{\rhoI-\rho}_{2} ,
\label{eq:S-inversion}
\end{equation}
the last step by Cauchy--Schwarz. Assembling Eqs.~\eqref{eq:S-Hoeffding-joint} and \eqref{eq:S-inversion} with $M_{i}=M/K$,
\begin{equation}
\D(\rho,\rhoI)=\tfrac12\norm{\rhoI-\rho}_{1}
\;\le\;\frac{\kappa}{\sqrt2}\,\sqrt{D}\;3^{n}\,
\sqrt{\frac{\log\!\bigl(2\cdot3^{n}/p_{\mathrm{fail}}\bigr)}{M}}\, ,
\label{eq:S-QST-bound}
\end{equation}
with probability at least $1-p_{\mathrm{fail}}$. We have not optimised the dimensional prefactor, which reflects the conditioning of the specific frame rather than anything fundamental; what matters is its exponential growth with $n$, to be contrasted with the $\sqrt{D^{2}-1}=\sqrt{4^{n}-1}$ of the benchmark~\eqref{eq:S-epsilonM}, and the fact that the $M^{-1/2}$ and $\sqrt{\log(1/p_{\mathrm{fail}})}$ dependences are the same. Task-adapted schemes such as threshold tomography effectively replace the number of settings $3^{n}$ by a state-dependent effective number $S(\rho)<3^{n}$ and reallocate the shots accordingly, which is the origin of their advantage in the scarce-resource regime documented in the main text; a sharp analytical characterisation of $S(\rho)$ is beyond our present scope.

Both realisations of $\varepsilon_{M}$---the fundamental benchmark~\eqref{eq:S-epsilonM} and the protocol-specific certificate~\eqref{eq:S-QST-bound}---have the form $\varepsilon_{M}\simeq A/\sqrt{M}$ and can be inserted interchangeably into Theorem~1; only the constant $A$ differs.

\section{D --- Optimal allocation of the copy budget}
\label{sec:optimal}

Inserting $\varepsilon_{M}=A/\sqrt{M}$ into the lower envelope of Theorem~1 defines
\begin{equation}
g(M):=\Bigl(1-\frac{M}{N}\Bigr)\Bigl[1-\frac{A}{\sqrt M}\Bigr]_{+} ,
\qquad
A=\Lambda(\rho,H)\times
\begin{cases}
\dfrac12\sqrt{(D^{2}-1)\log\dfrac{D^{2}-1}{4p_{\mathrm{fail}}}} & \text{benchmark, Eq.~\eqref{eq:S-epsilonM}},\\[2ex]
\dfrac{\kappa}{\sqrt2}\sqrt{D}\,3^{n}\sqrt{\log\dfrac{2\cdot3^{n}}{p_{\mathrm{fail}}}} & \text{Pauli QST, Eq.~\eqref{eq:S-QST-bound}},
\end{cases}
\label{eq:S-A}
\end{equation}
to be maximised over $M\in(A^{2},N)$. In the region where the bracket is positive,
\begin{equation}
g'(M)=-\frac1N\Bigl(1-\frac{A}{\sqrt M}\Bigr)+\Bigl(1-\frac MN\Bigr)\frac{A}{2M^{3/2}} .
\label{eq:S-gprime}
\end{equation}
When $M\ll N$ and $A/\sqrt M\ll1$---both conditions are verified a posteriori---the stationarity condition $g'(M^{\ast})=0$ reduces to $1/N=A/(2M^{\ast\,3/2})$, i.e.
\begin{equation}
M^{\ast}\simeq\Bigl(\frac{AN}{2}\Bigr)^{2/3} ,
\label{eq:S-Mstar}
\end{equation}
which indeed satisfies $M^{\ast}/N\propto N^{-1/3}\to0$ and $A/\sqrt{M^{\ast}}\propto N^{-1/3}\to0$ at fixed $A$; the underlying requirement $N\gg A^{2}$ is the precise meaning of the scarce-resource regime. Substituting back,
\begin{equation}
\etaI^{\max}
\simeq1-\frac{M^{\ast}}{N}-\frac{A}{\sqrt{M^{\ast}}}
=1-\Bigl(\frac{A^{2}}{4}\Bigr)^{\!1/3}N^{-1/3}-\bigl(2A^{2}\bigr)^{1/3}N^{-1/3}
=1-\frac{3}{2^{2/3}}\,A^{2/3}\,N^{-1/3}+\mathcal{O}\bigl(N^{-2/3}\bigr) ,
\label{eq:S-etamax}
\end{equation}
where the two terms contribute $2^{-2/3}A^{2/3}N^{-1/3}$ and $2^{1/3}A^{2/3}N^{-1/3}$ respectively, and $2^{-2/3}+2^{1/3}=3\cdot2^{-2/3}$. This is Eq.~(15) of the main text. The $N^{-1/3}$ approach to the oracle limit is slower than either ingredient alone---the standard quantum limit $M^{-1/2}$ of learning, and the linear cost $1-M/N$ of consuming copies---because the optimum must balance the two; it is, in this sense, the thermodynamic price of inference. For the Pauli-QST realisation the same exponents hold with the prefactor $A^{2/3}\propto2^{n/3}\,3^{2n/3}$, exponential in the number of qubits: this is the curse of dimensionality of generic tomography, and precisely the gap that task-adapted reconstruction is designed to close.

\section{E --- Sharpness, refinements, and scope}
\label{sec:limitations}

We list here all the subtle points at which the arguments above are approximate or rest on assumptions, together with the available refinements. (i) Lemma~\ref{lem:lipschitz} is linear in $\D(\rho,\rhoI)$ with the universal constant $2\Delta_{H}$; the constant cannot be improved in general (consider a two-level system with $\rho$ nearly passive and a reconstruction error aligned with the energy basis), but for states with well-separated spectra a sharper, gap-dependent bound holds, derived below. (ii) The truncation of $-\log\F$ at second order in Eq.~\eqref{eq:S-F-expansion} is legitimate for $\delta_{i}\to0$, i.e.\ deep in the regime $M\gg A^{2}$; the third-order coefficient produces relative corrections $\mathcal{O}(\delta_{i})$ to Eq.~\eqref{eq:S-MF-single}, and a fully non-perturbative multi-parameter version of the finite-sample theory of Ref.~\cite{Meyer2025} is not currently available. (iii) Hypothesis (H) of Sec.~\ref{sec:multi-MF}---joint per-direction attainability at finite $M$---has been discussed there at length, together with its worst-case repair (sequential batching, an extra $\sqrt{D^{2}-1}$) and its asymptotic justification (Holevo attainability, an extra $R(\rho)\le2$). (iv) Theorem~1 is a statement at confidence $1-p_{\mathrm{fail}}$; on the complementary event the inferred unitary can even raise the energy of the extraction copies, so $\WI$ may be negative. Since $\D\le1$ always, Lemma~\ref{lem:decomp} gives the deterministic worst case $\WI/\Werg\ge1-\Lambda$, and averaging over the two events,
\begin{equation}
\mathbb{E}\bigl[\etaI\bigr]\ge\Bigl(1-\frac MN\Bigr)\Bigl[1-\Lambda(\rho,H)\bigl(\varepsilon_{M}+p_{\mathrm{fail}}\bigr)\Bigr] ,
\label{eq:S-expectation}
\end{equation}
so that choosing $p_{\mathrm{fail}}\lesssim\varepsilon_{M}$ makes the confidence overhead subleading. (v) Assumption (A4) restricts the demon to product unitaries on the extraction copies; collective unitaries would replace the single-copy ergotropy by its total (asymptotically, bound-ergotropy-free) counterpart, a strictly larger benchmark, and our analysis then applies with $\Werg$ reinterpreted accordingly. (vi) All concentration statements use the i.i.d.\ assumption (A1); correlated sources require different tools, see e.g.~\cite{Mortimer2026}. (vii) The Hamiltonian is assumed known exactly (A3); an uncertainty $\delta H$ propagates additively into both $\Delta_{H}$ and $\Werg$ and hence into $\Lambda$, and could be handled by combining the present protocol with Hamiltonian learning.

We conclude this section with the refinement announced in point (i), valid under the additional hypothesis
\begin{quote}
(A6) the spectrum of $\rho$ is non-degenerate, with gaps $\gamma_{k}:=\min_{j\ne k}\abs{r_{k}-r_{j}}>0$.
\end{quote}
Write $E:=\rhoI-\rho$ and let $P_{k}=\ketbra{r_{k}}{r_{k}}$ and $\Prec{k}=\ketbra{\rrec{k}}{\rrec{k}}$ be the spectral projectors of $\rho$ and $\rhoI$. Assume $\norm{E}_{\infty}<\gamma_{k}/2$. By Weyl's inequality, $\abs{\rrec{j}-r_{j}}\le\norm{E}_{\infty}$ for every $j$, so the $k$-th eigenvalue of $\rhoI$ remains separated from the rest of the spectrum of $\rho$ by at least $\gamma_{k}-\norm{E}_{\infty}>\gamma_{k}/2$, and in particular the ordering of the eigenvalues is preserved. The Davis--Kahan $\sin\Theta$ theorem~\cite{DavisKahan1970,Bhatia1997} then bounds the angle $\theta_{k}$ between the two eigenvectors by $\sin\theta_{k}\le\norm{E}_{\infty}/(\gamma_{k}-\norm{E}_{\infty})$, and since for rank-one projectors $\norm{\Prec{k}-P_{k}}_{\infty}=\sin\theta_{k}$,
\begin{equation}
\bigl\|\Prec{k}-P_{k}\bigr\|_{\infty}\;\le\;\frac{\norm{E}_{\infty}}{\gamma_{k}-\norm{E}_{\infty}}\;\le\;\frac{2\norm{E}_{\infty}}{\gamma_{k}} .
\label{eq:S-DK}
\end{equation}
(The weakened denominator $\gamma_{k}/2$ is not cosmetic: the na\"ive bound $\norm{E}_{\infty}/\gamma_{k}$, without the factor $2$, is violated already for qubit examples with $\norm{E}_{\infty}$ close to $\gamma_{k}/2$.) To convert Eq.~\eqref{eq:S-DK} into a deficit bound, express the diagonal matrix elements of $\sigma_{\star}$ and $\sigmaI $ in the energy eigenbasis. From Eqs.~\eqref{eq:S-sigmastar} and \eqref{eq:S-sigmaI}, using $\UWIdag \ket{\epsilon_{j}}=\ket{\rrec{j}}$,
\begin{equation}
(\sigma_{\star})_{jj}=r_{j}=\Tr[\rho P_{j}] ,
\qquad
(\sigmaI )_{jj}=\bra{\rrec{j}}\rho\ket{\rrec{j}}=\Tr[\rho \Prec{j}] ,
\label{eq:S-diagonals}
\end{equation}
hence, by H\"older's inequality with $\norm{\rho}_{1}=1$ and Eq.~\eqref{eq:S-DK},
\begin{equation}
\abs{(\sigmaI )_{jj}-(\sigma_{\star})_{jj}}
=\abs{\Tr\bigl[\rho\,(\Prec{j}-P_{j})\bigr]}
\le\bigl\|\Prec{j}-P_{j}\bigr\|_{\infty}
\le\frac{2\norm{E}_{\infty}}{\gamma_{j}} .
\label{eq:S-diag-bound}
\end{equation}
Now $\Delta W=\Tr[H\sigmaI ]-\Tr[H\sigma_{\star}]=\sum_{j}\epsilon_{j}\bigl[(\sigmaI )_{jj}-(\sigma_{\star})_{jj}\bigr]$, and since both $\sigmaI $ and $\sigma_{\star}$ have unit trace the differences sum to zero, so any constant may be subtracted from the $\epsilon_{j}$; choosing $\bar\epsilon:=D^{-1}\sum_{j}\epsilon_{j}$ and applying the triangle inequality with Eq.~\eqref{eq:S-diag-bound},
\begin{equation}
\Delta W\;\le\;2\,\norm{\rhoI-\rho}_{\infty}\,\sum_{j=1}^{D}\frac{\abs{\epsilon_{j}-\bar\epsilon}}{\gamma_{j}} .
\label{eq:S-DK-deficit}
\end{equation}
Because $\norm{E}_{\infty}\le\norm{E}_{1}=2\D(\rho,\rhoI)$, and can be far smaller for high-rank perturbations, Eq.~\eqref{eq:S-DK-deficit} improves on Lemma~\ref{lem:lipschitz} whenever the spectrum of $\rho$ is well-gapped, $\sum_{j}\abs{\epsilon_{j}-\bar\epsilon}/\gamma_{j}\lesssim\Delta_{H}/\norm{E}_{\infty}\cdot\D$; in the opposite, near-degenerate limit it diverges and Lemma~\ref{lem:decomp}---which requires no spectral assumption whatsoever---takes over. Interpolating between the two regimes would require operator-angle techniques in the spirit of Ref.~\cite{Bhatia1997} and is left for future work.

\section{F --- Regime of validity}
\label{sec:regime}

The lower envelope of Theorem~1 is informative if and only if $\Lambda(\rho,H)\,\varepsilon_{M}<1$, i.e.
\begin{equation}
\varepsilon_{M}<\frac{1}{\Lambda(\rho,H)}=\frac{\Werg(\rho,H)}{2\Delta_{H}}\;\le\;\frac12 ,
\label{eq:S-threshold-eps}
\end{equation}
the last inequality because $\Werg\le\Delta_{H}$ for any state (the mean energy cannot exceed $\epsilon_{D}$ nor the passive energy drop below $\epsilon_{1}$). This makes precise the reading of $1/\Lambda$ as a state-dependent threshold error, beyond which partial information certifies nothing. Inserting the benchmark~\eqref{eq:S-epsilonM}, the demon must invest at least
\begin{equation}
M\;>\;M_{\mathrm{thr}}:=\frac{\Lambda(\rho,H)^{2}}{4}\,(D^{2}-1)\log\frac{D^{2}-1}{4p_{\mathrm{fail}}}
=\frac{\Delta_{H}^{2}}{\Werg(\rho,H)^{2}}\,(D^{2}-1)\log\frac{D^{2}-1}{4p_{\mathrm{fail}}}
\label{eq:S-Mthr}
\end{equation}
copies in the reconstruction before any nontrivial guarantee on the extracted work becomes possible. For nearly passive states, $\Werg\to0$ and $M_{\mathrm{thr}}$ diverges: no finite information budget certifies work extraction from a state indistinguishable from passive, in agreement with the sample-complexity obstructions to black-box work extraction of Ref.~\cite{Chakraborty2025}. For maximally mixed or nearly flat spectra the tomographic model degenerates ($\mathcal{F}_{Q}$ becomes singular along some directions and the expansion~\eqref{eq:S-F-expansion} loses accuracy), and Eq.~\eqref{eq:S-Mthr} should be read as necessary rather than sufficient, consistently with the known difficulty of reconstructing highly mixed states~\cite{HHJLW2017}. For $M\gg M_{\mathrm{thr}}$, finally, the efficiency follows the asymptotic law~\eqref{eq:S-etamax}.

\setcounter{NAT@ctr}{0}
\begin{supplementbib}{99}
\footnotesize

\bibitem{Meyer2025} J.~J.~Meyer, S.~Khatri, D.~Stilck Fran\c{c}a, J.~Eisert, and P.~Faist, \emph{Quantum metrology in the finite-sample regime}, PRX Quantum \textbf{6}, 030336 (2025).

\bibitem{Helstrom1976} C.~W.~Helstrom, \emph{Quantum Detection and Estimation Theory} (Academic Press, New York, 1976).

\bibitem{Holevo1976} A.~S.~Holevo, \emph{Probabilistic and Statistical Aspects of Quantum Theory} (North-Holland, Amsterdam, 1982).

\bibitem{YangChiribellaHayashi2019} Y.~Yang, G.~Chiribella, and M.~Hayashi, \emph{Attaining the ultimate precision limit in quantum state estimation}, Commun.\ Math.\ Phys.\ \textbf{368}, 223 (2019).

\bibitem{AlbarelliFrielDatta2019} F.~Albarelli, J.~F.~Friel, and A.~Datta, \emph{Evaluating the Holevo Cram\'er--Rao bound for multiparameter quantum metrology}, Phys.\ Rev.\ Lett.\ \textbf{123}, 200503 (2019).

\bibitem{BelliardoGiovannetti2021} F.~Belliardo and V.~Giovannetti, \emph{Incompatibility in quantum parameter estimation}, New J.\ Phys.\ \textbf{23}, 063055 (2021).

\bibitem{BraunsteinCaves1994} S.~L.~Braunstein and C.~M.~Caves, \emph{Statistical distance and the geometry of quantum states}, Phys.\ Rev.\ Lett.\ \textbf{72}, 3439 (1994).

\bibitem{Hubner1992} M.~H\"ubner, \emph{Explicit computation of the Bures distance for density matrices}, Phys.\ Lett.\ A \textbf{163}, 239 (1992).

\bibitem{Pagliaro2026} E.~Pagliaro, L.~Zambrano, M.~Alimuddin, A.~Hamma, A.~Ac\'\i n, and D.~Farina, \emph{Certifying ergotropy under partial information}, arXiv:2603.18828 (2026).

\bibitem{Zambrano2026} L.~Zambrano, T.~Parella-Dilm\'e, A.~Ac\'\i n, and D.~Farina, \emph{Certification of quantum properties with imperfect measurements}, arXiv:2601.16570 (2026).

\bibitem{Mortimer2026} L.~Mortimer, L.~Zambrano, A.~Ac\'\i n, and D.~Farina, \emph{Bounding many-body properties under partial information and finite measurement statistics}, arXiv:2601.10408 (2026).

\bibitem{Fuchs1999} C.~A.~Fuchs and J.~van de Graaf, \emph{Cryptographic distinguishability measures for quantum-mechanical states}, IEEE Trans.\ Inf.\ Theory \textbf{45}, 1216 (1999).

\bibitem{Fannes1973} M.~Fannes, \emph{A continuity property of the entropy density for spin lattice systems}, Commun.\ Math.\ Phys.\ \textbf{31}, 291 (1973).

\bibitem{Audenaert2007} K.~M.~R.~Audenaert, \emph{A sharp continuity estimate for the von Neumann entropy}, J.\ Phys.\ A: Math.\ Theor.\ \textbf{40}, 8127 (2007).

\bibitem{Hoeffding1963} W.~Hoeffding, \emph{Probability inequalities for sums of bounded random variables}, J.\ Am.\ Stat.\ Assoc.\ \textbf{58}, 13 (1963).

\bibitem{BHC} J.~Bretagnolle and C.~Huber-Carol, \emph{Estimation des densit\'es: risque minimax}, Z.\ Wahrscheinlichkeitstheorie Verw.\ Gebiete \textbf{47}, 119 (1979).

\bibitem{HHJLW2017} J.~Haah, A.~W.~Harrow, Z.~Ji, X.~Liu, and N.~Wu, \emph{Sample-optimal tomography of quantum states}, IEEE Trans.\ Inf.\ Theory \textbf{63}, 5628 (2017).

\bibitem{DavisKahan1970} C.~Davis and W.~M.~Kahan, \emph{The rotation of eigenvectors by a perturbation. III}, SIAM J.\ Numer.\ Anal.\ \textbf{7}, 1 (1970).

\bibitem{Bhatia1997} R.~Bhatia, \emph{Matrix Analysis} (Springer, New York, 1997).

\bibitem{Chakraborty2025} S.~Chakraborty, S.~Das, A.~Ghorui, S.~Hazra, and U.~Singh, \emph{Sample complexity of black box work extraction}, Quantum Sci.\ Technol.\ \textbf{10}, 045070 (2025).

\end{supplementbib}

%% file: refs.bib
@book{maxwell2001theory,
  author    = {Maxwell, James Clerk and Pesic, Peter},
  title     = {Theory of Heat},
  publisher = {Dover Publications (Courier Corporation)},
  address   = {Mineola, NY},
  year      = {2001},
  note      = {Reprint of the 1888 edition},
  url       = {https://store.doverpublications.com/products/9780486417356}
}

@article{szilard1929entropieverminderung,
  author  = {Szilard, Leo},
  title   = {{\"U}ber die Entropieverminderung in einem thermodynamischen System bei Eingriffen intelligenter Wesen},
  journal = {Zeitschrift f{\"u}r Physik},
  volume  = {53},
  pages   = {840--856},
  year    = {1929},
  doi     = {10.1007/BF01341281},
  url     = {https://link.springer.com/article/10.1007/BF01341281}
}

@article{szilard1964decrease,
  author  = {Szilard, Leo},
  title   = {On the decrease of entropy in a thermodynamic system by the intervention of intelligent beings},
  journal = {Behavioral Science},
  volume  = {9},
  number  = {4},
  pages   = {301--310},
  year    = {1964},
  doi     = {10.1002/bs.3830090402},
  url     = {https://onlinelibrary.wiley.com/doi/10.1002/bs.3830090402}
}

@article{landauer1961irreversibility,
  author  = {Landauer, Rolf},
  title   = {Irreversibility and heat generation in the computing process},
  journal = {IBM Journal of Research and Development},
  volume  = {5},
  number  = {3},
  pages   = {183--191},
  year    = {1961},
  doi     = {10.1147/rd.53.0183},
  url     = {https://ieeexplore.ieee.org/document/5392446}
}

@article{bennett1973logical,
  author  = {Bennett, Charles H.},
  title   = {Logical reversibility of computation},
  journal = {IBM Journal of Research and Development},
  volume  = {17},
  number  = {6},
  pages   = {525--532},
  year    = {1973},
  doi     = {10.1147/rd.176.0525},
  url     = {https://ieeexplore.ieee.org/document/5391327}
}

@article{bennett1982thermodynamics,
  author  = {Bennett, Charles H.},
  title   = {The thermodynamics of computation---a review},
  journal = {International Journal of Theoretical Physics},
  volume  = {21},
  number  = {12},
  pages   = {905--940},
  year    = {1982},
  doi     = {10.1007/BF02084158},
  url     = {https://link.springer.com/article/10.1007/BF02084158}
}

@article{allahverdyan2004maximal,
  author  = {Allahverdyan, Armen E. and Balian, Roger and Nieuwenhuizen, Theo M.},
  title   = {Maximal work extraction from finite quantum systems},
  journal = {Europhysics Letters},
  volume  = {67},
  number  = {4},
  pages   = {565--571},
  year    = {2004},
  doi     = {10.1209/epl/i2004-10101-2},
  url     = {https://iopscience.iop.org/article/10.1209/epl/i2004-10101-2}
}

@article{meyer2025quantum,
  author  = {Meyer, Johannes Jakob and Khatri, Sumeet and Stilck Fran{\c c}a, Daniel and Eisert, Jens and Faist, Philippe},
  title   = {Quantum metrology in the finite-sample regime},
  journal = {PRX Quantum},
  volume  = {6},
  pages   = {030336},
  year    = {2025},
  doi     = {10.1103/qbn1-p6bq},
  note    = {arXiv:2307.06370},
  url     = {https://journals.aps.org/prxquantum/abstract/10.1103/qbn1-p6bq}
}

@article{pagliaro2026certifying,
  author  = {Pagliaro, Elisa and Zambrano, Leonardo and Alimuddin, Mir and Hamma, Alioscia and Ac{\'\i}n, Antonio and Farina, Donato},
  title   = {Certifying ergotropy under partial information},
  journal = {arXiv preprint},
  year    = {2026},
  note    = {arXiv:2603.18828},
  url     = {https://arxiv.org/abs/2603.18828}
}

@article{zambrano2026certification,
  author  = {Zambrano, Leonardo and Parella-Dilm{\'e}, Teresa and Ac{\'\i}n, Antonio and Farina, Donato},
  title   = {Certification of quantum properties with imperfect measurements},
  journal = {Quantum Science and Technology},
  year    = {2026},
  doi     = {10.1088/2058-9565/ae7753},
  note    = {arXiv:2601.16570},
  url     = {https://iopscience.iop.org/article/10.1088/2058-9565/ae7753}
}

@article{mortimer2026bounding,
  author  = {Mortimer, Luke and Zambrano, Leonardo and Ac{\'\i}n, Antonio and Farina, Donato},
  title   = {Bounding many-body properties under partial information and finite measurement statistics},
  journal = {arXiv preprint},
  year    = {2026},
  note    = {arXiv:2601.10408},
  url     = {https://arxiv.org/abs/2601.10408}
}

@article{chakraborty2025sample,
  author  = {Chakraborty, Shantanav and Das, Siddhartha and Ghorui, Arnab and Hazra, Soumyabrata and Singh, Uttam},
  title   = {Sample complexity of black box work extraction},
  journal = {Quantum Science and Technology},
  volume  = {10},
  pages   = {045070},
  year    = {2025},
  doi     = {10.1088/2058-9565/ae0e4d},
  note    = {arXiv:2412.02673},
  url     = {https://iopscience.iop.org/article/10.1088/2058-9565/ae0e4d}
}

@article{binosi2024tailor,
  author  = {Binosi, Daniele and Garberoglio, Giovanni and Maragnano, Diego and Dapor, Maurizio and Liscidini, Marco},
  title   = {A tailor-made quantum state tomography approach},
  journal = {APL Quantum},
  volume  = {1},
  number  = {3},
  pages   = {036112},
  year    = {2024},
  doi     = {10.1063/5.0219143},
  note    = {arXiv:2401.12864},
  url     = {https://pubs.aip.org/aip/apq/article/1/3/036112/3304804}
}

@article{caruccio2025experimental,
  author  = {Caruccio, Eugenio and Maragnano, Diego and Rodari, Giovanni and Picus, Davide and Garberoglio, Giovanni and Binosi, Daniele and Albiero, Riccardo and Di Giano, Niki and Ceccarelli, Francesco and Corrielli, Giacomo and Spagnolo, Nicol{\`o} and Osellame, Roberto and Dapor, Maurizio and Liscidini, Marco and Sciarrino, Fabio},
  title   = {Experimental verification of threshold quantum state tomography on a fully-reconfigurable photonic integrated circuit},
  journal = {npj Quantum Information},
  volume  = {11},
  pages   = {173},
  year    = {2025},
  doi     = {10.1038/s41534-025-01111-z},
  note    = {arXiv:2504.05079},
  url     = {https://www.nature.com/articles/s41534-025-01111-z}
}

@article{james2001qst,
  author  = {James, Daniel F. V. and Kwiat, Paul G. and Munro, William J. and White, Andrew G.},
  title   = {Measurement of qubits},
  journal = {Physical Review A},
  volume  = {64},
  number  = {5},
  pages   = {052312},
  year    = {2001},
  doi     = {10.1103/PhysRevA.64.052312},
  url     = {https://link.aps.org/doi/10.1103/PhysRevA.64.052312}
}

@article{fuchs1999cryptographic,
  author  = {Fuchs, Christopher A. and van de Graaf, Jeroen},
  title   = {Cryptographic distinguishability measures for quantum-mechanical states},
  journal = {IEEE Transactions on Information Theory},
  volume  = {45},
  number  = {4},
  pages   = {1216--1227},
  year    = {1999},
  doi     = {10.1109/18.761271},
  url     = {https://ieeexplore.ieee.org/document/761271}
}

@article{sagawa2008second,
  author  = {Sagawa, Takahiro and Ueda, Masahito},
  title   = {Second law of thermodynamics with discrete quantum feedback control},
  journal = {Physical Review Letters},
  volume  = {100},
  pages   = {080403},
  year    = {2008},
  doi     = {10.1103/PhysRevLett.100.080403},
  url     = {https://link.aps.org/doi/10.1103/PhysRevLett.100.080403}
}

@article{parrondo2015thermodynamics,
  author  = {Parrondo, Juan M. R. and Horowitz, Jordan M. and Sagawa, Takahiro},
  title   = {Thermodynamics of information},
  journal = {Nature Physics},
  volume  = {11},
  number  = {2},
  pages   = {131--139},
  year    = {2015},
  doi     = {10.1038/nphys3230},
  url     = {https://www.nature.com/articles/nphys3230}
}

@article{maruyama2009colloquium,
  author  = {Maruyama, Koji and Nori, Franco and Vedral, Vlatko},
  title   = {Colloquium: The physics of Maxwell's demon and information},
  journal = {Reviews of Modern Physics},
  volume  = {81},
  number  = {1},
  pages   = {1--23},
  year    = {2009},
  doi     = {10.1103/RevModPhys.81.1},
  url     = {https://link.aps.org/doi/10.1103/RevModPhys.81.1}
}

@article{toyabe2010experimental,
  author  = {Toyabe, Shoichi and Sagawa, Takahiro and Ueda, Masahito and Muneyuki, Eiro and Sano, Masaki},
  title   = {Experimental demonstration of information-to-energy conversion and validation of the generalized Jarzynski equality},
  journal = {Nature Physics},
  volume  = {6},
  number  = {12},
  pages   = {988--992},
  year    = {2010},
  doi     = {10.1038/nphys1821},
  url     = {https://www.nature.com/articles/nphys1821}
}

@article{berut2012experimental,
  author  = {B{\'e}rut, Antoine and Arakelyan, Artak and Petrosyan, Artyom and Ciliberto, Sergio and Dillenschneider, Raoul and Lutz, Eric},
  title   = {Experimental verification of Landauer's principle linking information and thermodynamics},
  journal = {Nature},
  volume  = {483},
  number  = {7388},
  pages   = {187--189},
  year    = {2012},
  doi     = {10.1038/nature10872},
  url     = {https://www.nature.com/articles/nature10872}
}

@article{koski2014experimental,
  author  = {Koski, Jonne V. and Maisi, Ville F. and Pekola, Jukka P. and Averin, Dmitri V.},
  title   = {Experimental realization of a Szilard engine with a single electron},
  journal = {Proceedings of the National Academy of Sciences},
  volume  = {111},
  number  = {38},
  pages   = {13786--13789},
  year    = {2014},
  doi     = {10.1073/pnas.1406966111},
  url     = {https://www.pnas.org/doi/10.1073/pnas.1406966111}
}

@article{delrio2011thermodynamic,
  author  = {del Rio, L{\'\i}dia and {\AA}berg, Johan and Renner, Renato and Dahlsten, Oscar and Vedral, Vlatko},
  title   = {The thermodynamic meaning of negative entropy},
  journal = {Nature},
  volume  = {474},
  number  = {7349},
  pages   = {61--63},
  year    = {2011},
  doi     = {10.1038/nature10395},
  url     = {https://www.nature.com/articles/nature10395}
}

@article{goold2016role,
  author  = {Goold, John and Huber, Marcus and Riera, Arnau and del Rio, L{\'\i}dia and Skrzypczyk, Paul},
  title   = {The role of quantum information in thermodynamics---a topical review},
  journal = {Journal of Physics A: Mathematical and Theoretical},
  volume  = {49},
  number  = {14},
  pages   = {143001},
  year    = {2016},
  doi     = {10.1088/1751-8113/49/14/143001},
  url     = {https://iopscience.iop.org/article/10.1088/1751-8113/49/14/143001}
}

@article{vinjanampathy2016quantum,
  author  = {Vinjanampathy, Sai and Anders, Janet},
  title   = {Quantum thermodynamics},
  journal = {Contemporary Physics},
  volume  = {57},
  number  = {4},
  pages   = {545--579},
  year    = {2016},
  doi     = {10.1080/00107514.2016.1201896},
  url     = {https://www.tandfonline.com/doi/full/10.1080/00107514.2016.1201896}
}

@article{pusz1978passive,
  author  = {Pusz, Wojciech and Woronowicz, Stanis{\l}aw L.},
  title   = {Passive states and KMS states for general quantum systems},
  journal = {Communications in Mathematical Physics},
  volume  = {58},
  number  = {3},
  pages   = {273--290},
  year    = {1978},
  doi     = {10.1007/BF01614224},
  url     = {https://link.springer.com/article/10.1007/BF01614224}
}

@article{lenard1978thermodynamical,
  author  = {Lenard, Andrew},
  title   = {Thermodynamical proof of the Gibbs formula for elementary quantum systems},
  journal = {Journal of Statistical Physics},
  volume  = {19},
  number  = {6},
  pages   = {575--586},
  year    = {1978},
  doi     = {10.1007/BF01011769},
  url     = {https://link.springer.com/article/10.1007/BF01011769}
}

@article{skrzypczyk2014work,
  author  = {Skrzypczyk, Paul and Short, Anthony J. and Popescu, Sandu},
  title   = {Work extraction and thermodynamics for individual quantum systems},
  journal = {Nature Communications},
  volume  = {5},
  pages   = {4185},
  year    = {2014},
  doi     = {10.1038/ncomms5185},
  url     = {https://www.nature.com/articles/ncomms5185}
}

@article{francica2020quantum,
  author  = {Francica, Gianluca and Binder, Felix C. and Guarnieri, Giacomo and Mitchison, Mark T. and Goold, John and Plastina, Francesco},
  title   = {Quantum coherence and ergotropy},
  journal = {Physical Review Letters},
  volume  = {125},
  pages   = {180603},
  year    = {2020},
  doi     = {10.1103/PhysRevLett.125.180603},
  url     = {https://link.aps.org/doi/10.1103/PhysRevLett.125.180603}
}

@article{alicki2013entanglement,
  author  = {Alicki, Robert and Fannes, Mark},
  title   = {Entanglement boost for extractable work from ensembles of quantum batteries},
  journal = {Physical Review E},
  volume  = {87},
  pages   = {042123},
  year    = {2013},
  doi     = {10.1103/PhysRevE.87.042123},
  url     = {https://link.aps.org/doi/10.1103/PhysRevE.87.042123}
}

@article{andolina2019extractable,
  author  = {Andolina, Gian Marcello and Keck, Maximilian and Mari, Andrea and Giovannetti, Vittorio and Polini, Marco},
  title   = {Extractable work, the role of correlations, and asymptotic freedom in quantum batteries},
  journal = {Physical Review Letters},
  volume  = {122},
  pages   = {047702},
  year    = {2019},
  doi     = {10.1103/PhysRevLett.122.047702},
  url     = {https://link.aps.org/doi/10.1103/PhysRevLett.122.047702}
}

@article{campaioli2024colloquium,
  author  = {Campaioli, Francesco and Gherardini, Stefano and Quach, James Q. and Polini, Marco and Andolina, Gian Marcello},
  title   = {Colloquium: Quantum batteries},
  journal = {Reviews of Modern Physics},
  volume  = {96},
  number  = {3},
  pages   = {031001},
  year    = {2024},
  doi     = {10.1103/RevModPhys.96.031001},
  url     = {https://link.aps.org/doi/10.1103/RevModPhys.96.031001}
}

@article{horodecki2013fundamental,
  author  = {Horodecki, Micha{\l} and Oppenheim, Jonathan},
  title   = {Fundamental limitations for quantum and nanoscale thermodynamics},
  journal = {Nature Communications},
  volume  = {4},
  pages   = {2059},
  year    = {2013},
  doi     = {10.1038/ncomms3059},
  url     = {https://www.nature.com/articles/ncomms3059}
}

@article{brandao2015second,
  author  = {Brand{\~a}o, Fernando and Horodecki, Micha{\l} and Ng, Nelly and Oppenheim, Jonathan and Wehner, Stephanie},
  title   = {The second laws of quantum thermodynamics},
  journal = {Proceedings of the National Academy of Sciences},
  volume  = {112},
  number  = {11},
  pages   = {3275--3279},
  year    = {2015},
  doi     = {10.1073/pnas.1411728112},
  url     = {https://www.pnas.org/doi/10.1073/pnas.1411728112}
}

@article{baumgratz2014quantifying,
  author  = {Baumgratz, Tillmann and Cramer, Marcus and Plenio, Martin B.},
  title   = {Quantifying coherence},
  journal = {Physical Review Letters},
  volume  = {113},
  pages   = {140401},
  year    = {2014},
  doi     = {10.1103/PhysRevLett.113.140401},
  url     = {https://link.aps.org/doi/10.1103/PhysRevLett.113.140401}
}

@article{faist2015minimal,
  author  = {Faist, Philippe and Dupuis, Fr{\'e}d{\'e}ric and Oppenheim, Jonathan and Renner, Renato},
  title   = {The minimal work cost of information processing},
  journal = {Nature Communications},
  volume  = {6},
  pages   = {7669},
  year    = {2015},
  doi     = {10.1038/ncomms8669},
  url     = {https://www.nature.com/articles/ncomms8669}
}

@article{gross2010quantum,
  author  = {Gross, David and Liu, Yi-Kai and Flammia, Steven T. and Becker, Stephen and Eisert, Jens},
  title   = {Quantum state tomography via compressed sensing},
  journal = {Physical Review Letters},
  volume  = {105},
  pages   = {150401},
  year    = {2010},
  doi     = {10.1103/PhysRevLett.105.150401},
  url     = {https://link.aps.org/doi/10.1103/PhysRevLett.105.150401}
}

@article{huang2020predicting,
  author  = {Huang, Hsin-Yuan and Kueng, Richard and Preskill, John},
  title   = {Predicting many properties of a quantum system from very few measurements},
  journal = {Nature Physics},
  volume  = {16},
  number  = {10},
  pages   = {1050--1057},
  year    = {2020},
  doi     = {10.1038/s41567-020-0932-7},
  url     = {https://www.nature.com/articles/s41567-020-0932-7}
}

@article{aaronson2020shadow,
  author  = {Aaronson, Scott},
  title   = {Shadow tomography of quantum states},
  journal = {SIAM Journal on Computing},
  volume  = {49},
  number  = {5},
  pages   = {STOC18-368--STOC18-394},
  year    = {2020},
  doi     = {10.1137/18M120275X},
  note    = {Preliminary version in Proc. 50th ACM STOC (2018), 325--338},
  url     = {https://epubs.siam.org/doi/10.1137/18M120275X}
}

@article{elben2023randomized,
  author  = {Elben, Andreas and Flammia, Steven T. and Huang, Hsin-Yuan and Kueng, Richard and Preskill, John and Vermersch, Beno{\^\i}t and Zoller, Peter},
  title   = {The randomized measurement toolbox},
  journal = {Nature Reviews Physics},
  volume  = {5},
  number  = {1},
  pages   = {9--24},
  year    = {2023},
  doi     = {10.1038/s42254-022-00535-2},
  url     = {https://www.nature.com/articles/s42254-022-00535-2}
}

@article{giovannetti2011advances,
  author  = {Giovannetti, Vittorio and Lloyd, Seth and Maccone, Lorenzo},
  title   = {Advances in quantum metrology},
  journal = {Nature Photonics},
  volume  = {5},
  number  = {4},
  pages   = {222--229},
  year    = {2011},
  doi     = {10.1038/nphoton.2011.35},
  url     = {https://www.nature.com/articles/nphoton.2011.35}
}

@article{paris2009quantum,
  author  = {Paris, Matteo G. A.},
  title   = {Quantum estimation for quantum technology},
  journal = {International Journal of Quantum Information},
  volume  = {7},
  number  = {supp01},
  pages   = {125--137},
  year    = {2009},
  doi     = {10.1142/S0219749909004839},
  url     = {https://www.worldscientific.com/doi/10.1142/S0219749909004839}
}

@article{szczykulska2016multi,
  author  = {Szczykulska, Magdalena and Baumgratz, Tillmann and Datta, Animesh},
  title   = {Multi-parameter quantum metrology},
  journal = {Advances in Physics: X},
  volume  = {1},
  number  = {4},
  pages   = {621--639},
  year    = {2016},
  doi     = {10.1080/23746149.2016.1230476},
  url     = {https://www.tandfonline.com/doi/full/10.1080/23746149.2016.1230476}
}

@misc{SM,
  note = {See Supplemental Material for the full proof of the ergotropic-deficit bound, the derivation of the closed-form reconstruction precision from the finite-sample bounds of Ref.~\cite{meyer2025quantum} together with its critical comparison against the Holevo Cram\'er--Rao theory of multi-parameter estimation, the Hoeffding-type certificates for laboratory tomographic protocols, the derivation of the optimal copy allocation, a gap-dependent refinement of the deficit bound, and the quantitative regime of validity.}
}


%% file: refs_supplement.bib
@book{Helstrom1976,
  author    = {Helstrom, Carl W.},
  title     = {Quantum Detection and Estimation Theory},
  publisher = {Academic Press},
  address   = {New York},
  year      = {1976},
  url       = {https://www.sciencedirect.com/bookseries/mathematics-in-science-and-engineering/vol/123}
}

@book{Holevo1976,
  author    = {Holevo, Alexander S.},
  title     = {Probabilistic and Statistical Aspects of Quantum Theory},
  publisher = {North-Holland},
  address   = {Amsterdam},
  year      = {1982},
  note      = {2nd English ed., Edizioni della Normale, Pisa (2011), DOI 10.1007/978-88-7642-378-9},
  url       = {https://link.springer.com/book/10.1007/978-88-7642-378-9}
}

@article{YangChiribellaHayashi2019,
  author  = {Yang, Yuxiang and Chiribella, Giulio and Hayashi, Masahito},
  title   = {Attaining the ultimate precision limit in quantum state estimation},
  journal = {Communications in Mathematical Physics},
  volume  = {368},
  number  = {1},
  pages   = {223--293},
  year    = {2019},
  doi     = {10.1007/s00220-019-03433-4},
  url     = {https://link.springer.com/article/10.1007/s00220-019-03433-4}
}

@article{AlbarelliFrielDatta2019,
  author  = {Albarelli, Francesco and Friel, Jamie F. and Datta, Animesh},
  title   = {Evaluating the Holevo Cram\'er--Rao bound for multiparameter quantum metrology},
  journal = {Physical Review Letters},
  volume  = {123},
  pages   = {200503},
  year    = {2019},
  doi     = {10.1103/PhysRevLett.123.200503},
  url     = {https://link.aps.org/doi/10.1103/PhysRevLett.123.200503}
}

@article{BelliardoGiovannetti2021,
  author  = {Belliardo, Federico and Giovannetti, Vittorio},
  title   = {Incompatibility in quantum parameter estimation},
  journal = {New Journal of Physics},
  volume  = {23},
  pages   = {063055},
  year    = {2021},
  doi     = {10.1088/1367-2630/ac04ca},
  url     = {https://iopscience.iop.org/article/10.1088/1367-2630/ac04ca}
}

@article{BraunsteinCaves1994,
  author  = {Braunstein, Samuel L. and Caves, Carlton M.},
  title   = {Statistical distance and the geometry of quantum states},
  journal = {Physical Review Letters},
  volume  = {72},
  pages   = {3439--3443},
  year    = {1994},
  doi     = {10.1103/PhysRevLett.72.3439},
  url     = {https://link.aps.org/doi/10.1103/PhysRevLett.72.3439}
}

@article{Hubner1992,
  author  = {H{\"u}bner, Matthias},
  title   = {Explicit computation of the Bures distance for density matrices},
  journal = {Physics Letters A},
  volume  = {163},
  number  = {4},
  pages   = {239--242},
  year    = {1992},
  doi     = {10.1016/0375-9601(92)91004-B},
  url     = {https://www.sciencedirect.com/science/article/pii/037596019291004B}
}

@article{Fannes1973,
  author  = {Fannes, Mark},
  title   = {A continuity property of the entropy density for spin lattice systems},
  journal = {Communications in Mathematical Physics},
  volume  = {31},
  number  = {4},
  pages   = {291--294},
  year    = {1973},
  doi     = {10.1007/BF01646490},
  url     = {https://link.springer.com/article/10.1007/BF01646490}
}

@article{Audenaert2007,
  author  = {Audenaert, Koenraad M. R.},
  title   = {A sharp continuity estimate for the von Neumann entropy},
  journal = {Journal of Physics A: Mathematical and Theoretical},
  volume  = {40},
  number  = {28},
  pages   = {8127--8136},
  year    = {2007},
  doi     = {10.1088/1751-8113/40/28/S18},
  url     = {https://iopscience.iop.org/article/10.1088/1751-8113/40/28/S18}
}

@article{Hoeffding1963,
  author  = {Hoeffding, Wassily},
  title   = {Probability inequalities for sums of bounded random variables},
  journal = {Journal of the American Statistical Association},
  volume  = {58},
  number  = {301},
  pages   = {13--30},
  year    = {1963},
  doi     = {10.1080/01621459.1963.10500830},
  url     = {https://www.tandfonline.com/doi/abs/10.1080/01621459.1963.10500830}
}

@article{BHC,
  author  = {Bretagnolle, Jean and Huber-Carol, Catherine},
  title   = {Estimation des densit\'es: risque minimax},
  journal = {Zeitschrift f{\"u}r Wahrscheinlichkeitstheorie und Verwandte Gebiete},
  volume  = {47},
  number  = {2},
  pages   = {119--137},
  year    = {1979},
  doi     = {10.1007/BF00535278},
  url     = {https://link.springer.com/article/10.1007/BF00535278}
}

@article{HHJLW2017,
  author  = {Haah, Jeongwan and Harrow, Aram W. and Ji, Zhengfeng and Liu, Xiaodi and Wu, Nengkun},
  title   = {Sample-optimal tomography of quantum states},
  journal = {IEEE Transactions on Information Theory},
  volume  = {63},
  number  = {9},
  pages   = {5628--5641},
  year    = {2017},
  doi     = {10.1109/TIT.2017.2719044},
  url     = {https://ieeexplore.ieee.org/document/7956181}
}

@article{DavisKahan1970,
  author  = {Davis, Chandler and Kahan, William M.},
  title   = {The rotation of eigenvectors by a perturbation. III},
  journal = {SIAM Journal on Numerical Analysis},
  volume  = {7},
  number  = {1},
  pages   = {1--46},
  year    = {1970},
  doi     = {10.1137/0707001},
  url     = {https://epubs.siam.org/doi/10.1137/0707001}
}

@book{Bhatia1997,
  author    = {Bhatia, Rajendra},
  title     = {Matrix Analysis},
  series    = {Graduate Texts in Mathematics},
  volume    = {169},
  publisher = {Springer},
  address   = {New York},
  year      = {1997},
  doi       = {10.1007/978-1-4612-0653-8},
  url       = {https://link.springer.com/book/10.1007/978-1-4612-0653-8}
}
